\title{On the Log Partition Function of Ising Model
on Stochastic Block Model}
\author{\large
\textsc{Lu Liu\footnote{ g.jiayi.liu@gmail.com}}
\\ \textit{Central South University}
 }
\date{}
\def\mbbP{\mathbb{P}}
\def\h{\hat}
\def\oE{\operatorname{\mathbb{E}}}
\newtheorem{theorem}{Theorem}[section]
\newtheorem{proposition}[theorem]{Proposition}
\newtheorem{lemma}[theorem]{Lemma}
\newtheorem{corollary}[theorem]{Corollary}
\theoremstyle{definition}
\newtheorem{definition}[theorem]{Definition}
\newtheorem{condition}[theorem]{Condition}
\theoremstyle{remark}
\newtheorem{remark}[theorem]{Remark}
\begin{document}
\maketitle

\begin{abstract}

A sparse stochastic block model (SBM) with two communities is defined by the community probability $\pi_0,\pi_1$, and the connection probability between communities $a,b\in\{0,1\}$, namely $q_{ab} = \frac{\alpha_{ab}}{n}$. When $q_{ab}$ is constant in $a,b$, the random graph is simply the Erd\H{o}s-R\'{e}ny random graph. We evaluate the log partition function of the Ising model on sparse SBM with two communities.

As an application, we give consistent parameter estimation of the sparse SBM with two communities in a special case. More specifically, let $d_0,d_1$ be  the average degree of  the two communities, i.e.,  $d_0\overset{def}{=}\pi_0\alpha_{00}+\pi_1\alpha_{01},d_1\overset{def}{=}\pi_0\alpha_{10}+\pi_1\alpha_{11}$.
We focus on the regime $d_0=d_1$ (the regime $d_0\ne d_1$ is trivial). In this regime, there exists $d,\lambda$ and $r\geq 0$ with $\pi_0=\frac{1}{1+r}, \pi_1=\frac{r}{1+r}$, $\alpha_{00}=d(1+r\lambda), \alpha_{01}=\alpha_{10} = d(1-\lambda), \alpha_{11} = d(1+\frac{\lambda}{r})$. We give a consistent estimator of $r$ when $\lambda<0$. The estimator of $\lambda$ given by \citep{mossel2015reconstruction}  is  valid in  the general situation. We also provide a random clustering algorithm which does not require knowledge of parameters and which is positively correlated with the true community label when $\lambda<0$.

\noindent {\bf Key words:} {\it
stochastic block model, clustering, parameter estimation,
sparsity}

\noindent {\bf AMS 2000 subject classification.}{\it\
Primary 62F12; secondary 90B15}
\end{abstract}

\section{Introduction}

Stochastic block model (SBM),
also known as planted partition
model, is one of the most
commonly used generative network model.
 In this model, every
node $i\in V=\{1,2,\cdots,n\}$ is assigned a latent type
(community \emph{label})
$\sigma_i$ with probability
$\pi_{\sigma_i}$.
Conditioned on
node types,
the connection between
nodes is independent of each other.
For every two nodes $i$ and $j$,
 the conditional connection probability
is
$q_{\sigma_i\sigma_j}$,
which
depends  on the types of the two nodes.
 Denote the
SBM defined by $q,\pi$ as $SBM(q,\pi)$.
When the connection probability
 is a constant,
  the model becomes
   the Erd\H{o}s-R\'{e}ny model $\mathcal{G}(n,q)$.
 The \emph{clustering (community detection)}
 problem is to infer the  latent
types from the network structure.
This is an important problem
in many areas such as computer science,
social network analysis,
statistics, machine learning, biology and
image processing
(see \citep{Fortunato2010Community} for a
thorough introduction).
The \emph{parameter estimation} problem
is to estimate model parameters
$\pi_a,q_{ab}$.

SBM is one of the most popular network model,
 not only because of its simplicity,
 but also for the following reasons.
First, it well fits a lot of 
real 
world data in the following fields,
 social network
 \citep{holland1981exponential,Newman2002Random,Robins2009Recent}
 (notably, \cite{holland1981exponential}
 first proposed SBM),
 biology \citep{rohe2011spectral},
gene regulatory network
\citep{Schlitt2007Current,Pritchard2000Inference},
image processing
\citep{Shi2000Normalized,Sonka2008Image}.
Second,
the model
is a nice tool to investigate
clustering algorithms from
the theoretical
perspective.
 Some early works in this stream are
 \citep{Dyer1989The,Jerrum1998The,Condon2001Algorithms}.
 Their
 focus is the algorithmic aspects of
  the min-bisection problem.
Later a vast amount of research is carried out
to study and compare the performance of various
clustering
algorithms on SBM. Roughly speaking,
these algorithms
can be divided into the following categories.
Modularity algorithm
\citep{Newman2004Finding},
likelihood algorithm
\citep{bickel2009nonparametric,Choi2012Stochastic,amini2013pseudo,Celisse2012Consistency} etc.,
and most importantly,
spectral algorithm
\citep{chatterjee2011random,balakrishnan2011noise,Jin2014Fast,Sarkar2013Role,Krzakala2013Spectral} etc..

Notably, \cite{bickel2009nonparametric} provided
a general framework to establish
consistency of clustering.
It was further extended by \citep{zhao2012consistency}
to establish consistency of many
clustering algorithms
in more general models.
 These algorithms include
maximum likelihood estimation and
various modularity methods.
The technique is largely based on finite covering
plus concentration inequality.
This line is also followed to establish consistency
of spectral clustering \citep{Lei2015Consistency}.
Although there need the
 evaluation of the norm of a random matrix,
 which is more complicated.

\subsection{Related work on sparse stochastic block model}
\label{sbmsubsec0}

In reality, many networks are  sparse.
For example,
 \cite{Leskovec2008Statistical}.
 found that many   
large networks with
millions of nodes have an average
degree less than $20$.
These networks include,
social
networks like LinkedIn and MSN Messenger;
collaboration networks in
movies and on the arXiv
(see also \citep{Strogatz2001Exploring});
and some biological networks.

Despite the vast amount of literature on SBM,
most of the literatures has focused on dense SBM.
Where dense means that the average degree
scales with network size and is usually
of order at least $\log n$.
However, very few is known for sparse SBM.
A sparse SBM refers to
the SBM
with constant level  degree, i.e.,
$q_{ab} = \frac{\alpha_{ab}}{n}$.
Sparse SBM is generally more difficult to
handle. For instance,
in contrast with dense SBM, consistent
clustering is impossible since
there exists a constant portion of
isolated nodes, and there is no
way to identify the community label of
an isolated node. Also note that
the local structure of the network
can not be distinguished from that of
a Erd\H{o}s-R\'{e}ny model
$\mathcal{G}(n,\frac{d}{n})$ if the expected
degree of each node is $d$. 
For instance,
in such SBM, the degree of the nodes follows
the Poisson distribution with mean $d$,
which is also the degree
distribution in $\mathcal{G}(n,\frac{d}{n})$.
For this reason, spectral algorithm based on
the adjacent
matrix $A$ or a constant power of $A$
or  modifications of such matrix
(say the Laplacian)
does not apply to sparse SBM. 

In sparse SBM,
we say the clustering problem is
\emph{solvable} iff there exists an estimator
of the community label, which is positively
 correlated to the true community label.
Most studies in
sparse SBM have been limited
to balance case.
\cite{decelle2011asymptotic} investigated
the sparse SBM with two communities
and balance
parameters i.e., $q = q^{(b)}=\begin{pmatrix}
\frac{\alpha}{n},\frac{\beta}{n}
\\ \frac{\beta}{n}, \frac{\alpha}{n}
\end{pmatrix},
\pi_0^{(b)}=\pi_1^{(b)}=1/2$. Based on
ideas from statistical physics
(cavity method), \cite{decelle2011asymptotic}
conjectured that
clustering  in $SBM(q^{(b)},\pi^{(b)})$
is solvable if and only if
$d\lambda^2>1$.
On the negative part,
\cite{mossel2015reconstruction}
showed that clustering
in $SBM(q^{(b)},\pi^{(b)})$ is not solvable
if $d\lambda^2<1$.
The same condition also implies
that the model $SBM(q^{(b)},\pi^{(b)})$
and the
Erd\H{o}s-R\'{e}ny model $\mathcal{G}(n,\frac{d}{n})$
are  contiguous
(which implies no consistent
estimator of $q^{(b)}$ exist).
On the positive part,
\cite{coja2010graph} provide
a spectral algorithm for clustering.
But in their paper,
the condition
ensuring the
positive correlation
 is stronger than the condition
$d\lambda^2>1$.
Finally,  \cite{mossel2013proof} and
\cite{massoulie2014community}
independently provide
spectral algorithms solving the
clustering problem
in $SBM(q^{(b)},\pi^{(b)})$ under the condition
$d\lambda^2>1$.
Therefore
\cite{mossel2015reconstruction},
\cite{mossel2013proof} and
\cite{massoulie2014community}
together confirmed the
conjecture proposed by \cite{decelle2011asymptotic}.
Recently, \cite{bordenave2015non} deal with the
general sparse SBM with arbitrarily many
blocks (see also \citep{abbe2015community}). Their
result confirm the "spectral redemption conjecture",
which is a generalized version of
the conjecture in \citep{decelle2011asymptotic}.
They prove, based on non-backtracking
walks on the graph, that community detection
is solvable down to the Kesten-Stigum threshold.

\cite{xu2014edge} studied
 the SBM with edge label.
The edge label indicates the type
of the connection.
For the SBM with edge label,
\cite{heimlicher2012community}
proposed a conjecture
similar to \citep{decelle2011asymptotic}.
 \cite{lelarge2015reconstruction},
similar to \cite{mossel2015reconstruction},
confirmed a half of the conjecture.
They
proved that the condition
proposed in \citep{heimlicher2012community}
implies that both consistent
parameter estimation and positively
correlated clustering are impossible.
On the positive part,
\cite{xu2014edge}
proposed a
clustering algorithm taking
advantage of the edge label. The proof of
positive correlation of their algorithm
only concerns
Chernoff inequality.
But the condition ensuring positive correlation
is stronger than that proposed by \citep{heimlicher2012community}.
 It is not known
whether  the spectral algorithms
in
\citep{mossel2013proof,massoulie2014community}
can be adapted
to provide a positively correlated
clustering algorithm under
the mere condition of \citep{heimlicher2012community}.
The   problem of estimating
the distribution of edge label is also
unknown.

\subsection{Motivation and technique}

  The technique used in dense SBM
can not be directly applied  to sparse SBM.
In dense SBM,  consistent parameter estimation
is usually a by product of consistent clustering.
But it does not seem that way in sparse SBM.
For instance, \cite{mossel2013proof}
uses the technique of random matrix
to estimate the community labels,
while
\cite{mossel2015reconstruction}
estimates $\lambda$ by counting $k-$cycles.
In dense SBM,
the lower bound of estimation
error is usually given by information
inequality such as Fano's inequality
 \citep{gao2015rate}.
In sparse SBM, second moment method,
which yields the results that two
models are closed, is used to prove impossibility
of parameter estimation
(see \citep{mossel2015reconstruction} section 5).
There is a good  reason to speculate that
 clustering is not solvable
 if the SBM is not distinguishable from some
  Erd\H{o}s-R\'{e}ny model
$\mathcal{G}(n,\frac{d}{n})$.
\citep{neeman2014non} recently obtained a result
in this fold.
Despite that \cite{bordenave2015non} has solved the
community detection problem and
the parameter estimation problem
for the general sparse SBM down to the
Kesten-Stigum threshold,
it is not known where exactly the threshold
for reconstructibility and distinguishability
is (see \citep{banks2016information} for such results).
 By far, most of results establishing
indistinguishability employ second moment
method. Hopefully, calculating the log partition
function of the Ising model on SBM provide
an alternative approach. Also note that
the conditional distribution of $\sigma$
given $G$ is approximately an Ising model
when $n$ is large. Therefore it is likely
that analysis of the Ising model
on a sparse SBM ultimately provide an exact
threshold for reconstructibility and distinguishability.

\subsection{Outline}
Denote the
  probability of the two
  communities by $\pi_0,\pi_1$.
  The
connection probability between
community $a$ and $b$ is $\frac{\alpha_{ab}}{n}$.
Since the graph is undirected,
it is required that $\alpha_{01}=
\alpha_{10}$.
Let
 $q = \begin{pmatrix}\frac{\alpha_{00}}{n} &
\frac{\alpha_{01}}{n}\\
\frac{\alpha_{10}}{n} &
\frac{\alpha_{11}}{n}\end{pmatrix}$,
let $d_0 = \pi_0\alpha_{00}+\pi_1\alpha_{01},
d_1=\pi_0\alpha_{10}+\pi_1\alpha_{11}$.
It is easy to see
if $d_0=d_1=d$, then there exists
$r\geq 0, \lambda$ with
$\pi_0=\frac{1}{1+r},\pi_1=\frac{r}{1+r}$,
and $q = \begin{pmatrix}
\frac{d(1+r\lambda)}{n}\ &
\frac{d(1-\lambda)}{n}\\
\frac{d(1-\lambda)}{n}\ &
\frac{d(1+\frac{\lambda}{r})}{n}\end{pmatrix}$.
We focus on the regime $d_0=d_1=d$. The
parameter estimation and community detection
in the regime
$d_0\ne d_1$ are trivial.
Let $SBM(d,\lambda,r)$ denote
the stochastic block model
defined by  $d,\lambda,r$.
 The paper is organized as follows.
 We show in section
 \ref{sbmsubsec1}
 that $d,\lambda$ can
be estimated in the same way
(by counting  $k-$cycles)
as
in the balanced SBM.
We evaluate
the log partition function of an Ising
model on graph $G$ in section \ref{sbmsubsec2}.
As an application,
we propose a consistent estimator of
 $r$ when $\lambda<0$ and $d$
being sufficiently large.
We provide 
a random clustering algorithm,
which samples $\hat{\sigma}$ according
to an appropriate Ising model
on $G$,
 in section \ref{sbmsubsec3}.
 The clustering algorithm has
  positive correlation with the true
community label when $\lambda<0$.
 Section \ref{sbmsec3}
contains proof of lemmas in section \ref{sbmsec2}.
Concluding remarks
and some further questions
are given in section \ref{sbmsec4}.


\subsection{Notations}

For a given undirected graph $G=(V,E)$ and
a node $u\in V$,
let $deg(u)$ denote the degree of
$u$ in $G$.
For $A,B\subseteq V(G)$,
$e_G(A,B) = \big\{
\ \{i,j\}\in E(G): i\in A, j\in B
\big\}$.
For an event $A$,
$I(A)$ denote the indicator function of
$A$.
For two sets $A,B$,
$A\Delta B$ denote
$A-B\ \cup\  B-A$.
For two sequences of reals
$f_n,g_n$ write $f_n\sim g_n$ if
$\lim\limits_{n\rightarrow\infty}
\frac{f_n}{g_n}=1$;
$f_n=\Omega(g_n)$ if
$f_n\geq k g_n$ for some positive real
$k$.
We write
$\oE_{X|Y}[f(X,Y)]$ or
$\oE_{X}[f(X,Y)]$
($\mbbP_{X|Y}( (X,Y)\in \mathcal{Z} )$
or $\mbbP_{X}((X,Y)\in\mathcal{Z})$) to denote
the expectation
(probability) with respect to $X$
 conditional on $Y$.
Write $\oE_{X\sim p}$
($\mbbP_{X\sim p}$) to denote
the expectation (probability)
when the distribution of $X$ is $p$.

\section{Main results}\label{sbmsec2}

Denote both $d_0,d_1$ by $d$.
Let $P =
\begin{pmatrix}
p_{00}& p_{01}\\
p_{10}& p_{11}
\end{pmatrix}= \begin{pmatrix}
\frac{\pi_0\alpha_{00}}{d_0}&
\frac{\pi_1\alpha_{01}}{d_0}\\
\frac{\pi_0\alpha_{10}}{d_1}&
\frac{\pi_1\alpha_{11}}{d_1}
\end{pmatrix}$.
Bear in mind that $P$ can be regard as a
markov transition matrix.
Let $\lambda=p_{11}+p_{00}-1$ denote the second large
eigenvalue of $P$.
In subsection \ref{sbmsubsec1} we show that, similar with
\citep{mossel2015reconstruction} section 3,
by counting   $k$-cycles for
appropriately large $k$
 we can estimate $\lambda$
consistently
provided $d\lambda^2>1$.
 We give in subsection \ref{sbmsubsec2}
a consistent estimator of
$r$ in the case $\lambda<0$; and
subsection \ref{sbmsubsec3} a random
 clustering
algorithm with positive correlation
with true labeling in the same case.

\subsection{Estimating $d,\lambda$}
\label{sbmsubsec1}
Let $C_k$ denote the number
of cycles of length $k$. The following
proposition says that $\lambda$ can be
consistently estimated by counting
$k$-cycles.
\begin{proposition}\label{sbmprop4}
Let $\h{d} =\frac{1}{n} \sum\limits_{u\in V}
deg(u)$.
Then $\h{d}$ is a $\sqrt{n}-$consistent
estimator of $d$.
 If $d\lambda^2>1$, $k=o(\log n)$,
then $\frac{(2kC_k -\h{d}^k)^{\frac{1}{k}}}{\hat{d}}$
is a consistent estimator of $\lambda$.
\end{proposition}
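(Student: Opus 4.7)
The first claim about $\hat d$ is a short first/second-moment calculation. Conditioning on $\sigma_u=a$ one has $\mathbb{E}[\deg(u)\mid \sigma]=\sum_{v\ne u}q_{a\sigma_v}$, so averaging over $\sigma_v\sim\pi$ gives $\mathbb{E}[\deg(u)]=(n-1)d/n$, whence $\mathbb{E}\hat d=d-d/n$. The edges are independent given $\sigma$ and each edge indicator enters at most two of the $\deg(v)$'s, so a direct computation shows $\mathrm{Var}(\hat d)=O(1/n)$. Chebyshev then delivers $\hat d-d=O_P(n^{-1/2})$.

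For the second claim I plan to follow the cycle-counting blueprint of \citep{mossel2015reconstruction} and adapt it to the present asymmetric setting. Write $N_k=2kC_k$ as a sum of indicators $X_\gamma$ over ordered $k$-tuples $\gamma=(v_1,\dots,v_k)$ of distinct vertices with $\{v_i,v_{i+1}\}\in E$ cyclically. Integrating first over the edges given $\sigma$ and then over the labels gives
\[
\mathbb{E}[N_k]=\frac{n!}{(n-k)!\,n^k}\,\operatorname{Tr}(U^k),\qquad U_{ab}:=\pi_a\alpha_{ab}.
\]
Writing $U=d\,D P D^{-1}$ with $D=\operatorname{diag}(\pi_0,\pi_1)$, cyclic invariance of the trace together with the spectrum $\{1,\lambda\}$ of the stochastic matrix $P$ gives $\operatorname{Tr}(U^k)=d^k+(d\lambda)^k$. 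For $k=o(\log n)$ the prefactor $n!/((n-k)!\,n^k)$ is $1+o(1)$, so $\mathbb{E}[N_k]=d^k+(d\lambda)^k+o(d^k)$.

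The technical heart is bounding $\mathrm{Var}(N_k)$. I would expand $\mathrm{Var}(N_k)=\sum_{\gamma,\gamma'}\mathrm{Cov}(X_\gamma,X_{\gamma'})$ and stratify pairs $(\gamma,\gamma')$ by the union of edges they share: independence of the edge variables given $\sigma$ forces the covariance of edge-disjoint pairs to come purely from label correlations, which is a lower-order contribution, and the remaining pairs are enumerated by the standard combinatorics of two $k$-cycles overlapping on a vertex-disjoint union of paths. Carrying out this count in direct parallel to the balanced-case bound in \citep{mossel2015reconstruction} should yield $\mathrm{Var}(N_k)=o\bigl((d\lambda)^{2k}\bigr)$ once $d\lambda^2>1$ and $k=o(\log n)$ with a sufficiently small implicit constant, so $N_k-\mathbb{E}[N_k]=o_P((d\lambda)^k)$.

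Finally, $\hat d-d=O_P(n^{-1/2})$ yields $\hat d^k-d^k=O_P(k d^{k-1}n^{-1/2})$, which is also $o_P((d\lambda)^k)$ in the same regime. Hence $N_k-\hat d^k=(d\lambda)^k(1+o_P(1))$, and taking the real $k$-th root (which inherits the sign of $(d\lambda)^k$) and dividing by $\hat d$ produces $\lambda(1+o_P(1))$. The main obstacle is the second-moment estimate on $N_k$: tracking the cycle overlaps in the two-type model requires a slightly more involved enumeration than in the balanced case because the contribution of each overlap pattern depends on a trace of an asymmetric power of $U$, but no genuinely new ideas beyond those of \citep{mossel2015reconstruction} are needed.
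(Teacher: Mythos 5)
Your proposal follows essentially the same route as the paper: the expectation of the cycle count reduces to $\frac{d^k}{n^k}\operatorname{Tr}(P^k)=\frac{d^k}{n^k}(1+\lambda^k)$ via the Markov-chain/trace identity, the variance is controlled by the (adapted) second-moment computation of \citep{mossel2015reconstruction}, and the error from $\hat d^k-d^k$ is absorbed because $k=o(\log n)$ makes $\lambda^{-k}=n^{o(1)}$. The only difference is that you sketch re-deriving the variance bound by stratifying overlapping cycle pairs, where the paper simply cites Theorem 3.1 of \citep{mossel2015reconstruction}; this is a matter of detail, not of method.
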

\begin{proof}
The $\sqrt{n}$-consistency of
$\h{d}$ is obvious.
Prove the second conclusion,
we compute
the probability that a given sequence
of different nodes $u_1,\cdots,u_{k}$ forms a cycle.
Set $u_{k+1}=u_1$.
Note that
\begin{align}\label{sbmeq1}
\oE\big[\ \prod\limits_{i=1}^{k}
X_{u_i u_{i+1}}\big]
=&\sum\limits_{\sigma\in \{0,1\}^{k+1},\sigma_{k+1} = \sigma_1}
\prod\limits_{i=1}^{k}
\pi_{\sigma_{i+1}}
\frac{\alpha_{\sigma_i\sigma_{i+1}}}{n}
\\ \nonumber
=&\frac{d^k}{n^k}\sum\limits_{\sigma\in \{0,1\}^{k+1},\sigma_{k+1} = \sigma_1}
\prod\limits_{i=1}^{k}
p_{\sigma_i\sigma_{i+1}}.
\end{align}
Think of $\sum\limits_{\sigma\in \{0,1\}^{k+1},
\sigma_{k+1}=\sigma_1=h}
\ \prod\limits_{i=1}^{k}
p_{\sigma_i\sigma_{i+1}}$
as the probability of the following event:
a markov chain with transition matrix
$P$ starting at $h$,  arrives at
$h$ after $k$ steps. 
Therefore continue
(\ref{sbmeq1}) we have
\begin{align}\label{sbmeq2}
\oE\big[\ \prod\limits_{i=1}^{k}
X_{u_i u_{i+1}}\big]
=&\frac{d^k}{n^k}\bigg(\ (1,0)P^k\begin{pmatrix}
1\\ 0\end{pmatrix}+
(0,1)P^k\begin{pmatrix}
0\\ 1\end{pmatrix}\
\bigg)
\\ \nonumber
=&\frac{d^k}{n^k}\cdot
Tr[P^k]
\\ \nonumber
=&\frac{d^k}{n^k}(1+\lambda^k).
\end{align}
Thus, for $k=o(\log n)$,
\begin{align}\nonumber
\oE\big[ C_k\big]=
\binom{n}k\cdot k!\frac{1}{2k}
\cdot\frac{d^k}{n^k}(1+\lambda^k)
=(1+O(\frac{1}{n})) \frac{d^k}{2k}(1+\lambda^k).
\end{align}
Similarly, for
$k=o(\log n)$,
$Var(C_k) = O(\frac{d^k}{k}(1+\lambda^k))$.
This is  given by \citep{mossel2015reconstruction}
 theorem 3.1
where $\oE\big[C_k(C_k-1)
\cdots (C_k-m)\big]$ is calculated.
Therefore, if $d\lambda^2>1$, $k=o(\log n)$,
then
$$
2kC_k -\h{d}^k=
(1+O(\frac{1}{n}))(d\lambda)^{k}
+O_p(\frac{d^k}{n}+\frac{k}{\sqrt{n}}+\sqrt{k}d^{k/2}
+\sqrt{k}(d\lambda)^{k/2})
=
(d\lambda)^k(1+ o_p(1)) .
$$
 Thus, if $d\lambda^2>1$, $k=o(\log n)$,
then $\frac{(2kC_k -\h{d}^k)^{\frac{1}{k}}}{\hat{d}}$
is a consistent estimator of $\lambda$.

\end{proof}

\subsection{Evaluating the log partition function}
\label{sbmsubsec2}
Let $SBM(d,\lambda,r,n)$ denote
the SBM
defined by $d,\lambda,r$ of
size $n$.
Clearly, the SBM defined by
$d,\lambda,r$ and $d,\lambda,\frac{1}{r}$
are identical. Therefore, without loss
of generality, assume $r\geq 1$. Also note
that  $1+r\lambda\geq 0$ is automatically
required since $\alpha_{00}\geq 0$.

For any undirected graph
$G$ and $\sigma\in \{0,1\}^{V(G)}$,
let $J(\sigma; G) =
\bigg| \bigg\{\ \{u,v\}:\{u,v\}\in E(G),
\sigma(u)=\sigma(v)
\bigg\}
 \bigg|$.
We evaluate the
following log partition function:
\begin{align}\nonumber
Z(\beta,G) = \log
\big(
\sum\limits_{\sigma\in \{0,1\}^{V(G)}}
e^{-\beta J(\sigma;G)}
\big ).
\end{align}

To state our main
results, we introduce the following
symbols.
Denote by
$g(z)$ the function
\begin{eqnarray}\label{sbmdefg}
g(z) = \left\{
\begin{aligned}
&\min\big\{\
z-(1-z)\log (1-z),\ (1+z)\log (1+z)-z
\ \big\}\ \ &\text{if }0<z<1;\\ \nonumber
&\infty \ &\text{if }z> 1.
\end{aligned}
\right.
\end{eqnarray}
Let
\begin{align}\nonumber
&C(r,\lambda) =
\inf\limits_{
0\leq x\leq 1,0\leq y\leq 1}
\bigg\{
[r\lambda(x-y)^2+
(x+ry-\frac{1+r}{2})^2]
+\frac{(1+r)^2}{4}
\bigg\};
\\ \nonumber
&y^* = \min\{\frac{r+1}{2(r+\lambda)},1\},\
x^*=0;
\
\\ \nonumber
&
\varepsilon_0= g^{-1}
(\frac{4\log 2\cdot (1+r)^2}{dC(r,\lambda)});
\\ \nonumber
&\epsilon_0 =
  \dfrac{2(r+1)^2\big(\frac{2\log 2}{\beta d}+
 2\varepsilon_0\big)}
 {\min\{r-2r\lambda-1, \frac{-\lambda (1+r)^2}{r+\lambda}\}}
 ;
\\ \nonumber
&\epsilon_1 =
\min\bigg\{
\frac{
(1+r)^2(\frac{2\log 2}{\beta d}+2\varepsilon_0)
}
{|2(r^2+r\lambda)y^*-r^2-r|}
,
\sqrt{\frac
{
2(1+r)^2
(\frac{2\log 2}{\beta d}+2\varepsilon_0)
}
{
r^2+r\lambda
}
}
\bigg\}.
\end{align}

\begin{condition}\label{sbmcond1}
\ \\
\begin{enumerate}
\item $0<\beta\leq 1$;


\item
\begin{enumerate}
\item
$(r+1)^2\big(\frac{2\log 2}{\beta d}+
 2\varepsilon_0\big)<
 \frac{(1+r)^2}{4}-C(r,\lambda)$;

\item
$\min\bigg\{
\frac{
(1+r)^2(\frac{2\log 2}{\beta d}+2\varepsilon_0)
}
{|2(r^2+r\lambda)y^*-r^2-r|}
,
\sqrt{\frac
{
2(1+r)^2
(\frac{2\log 2}{\beta d}+2\varepsilon_0)
}
{
r^2+r\lambda
}
}
\bigg\}
<\min\{\frac{r+1}{2(r+\lambda)},1\}
-\frac{r+1}{2r} ;$

\item $
(r+1)^2\big(\frac{2\log 2}{\beta d}+
 2\varepsilon_0\big)
 \leq
 \dfrac
 {(r^2+r\lambda)
\cdot
\big( \min\{
 r-2r\lambda-1, \frac{-\lambda (1+r)^2}{r+\lambda}
 \}
 \big)^2}
 {8r^2(1-\lambda)^2}
$;

\item $d> \frac{9}{2}\cdot
\frac{4\log 2\cdot (1+r)^2}{C(r,\lambda)}.$
\end{enumerate}

\end{enumerate}
\end{condition}

\begin{remark}
For any $r,\lambda$ there exists
sufficiently large $d$ and
sufficiently small $\beta$ satisfying
the condition \ref{sbmcond1}.
In the sense $d$ being
large and $r,\lambda$ being constant,
the $\beta$ we bear in mind satisfy the
follows:
$\beta=o(1) $, $\beta d\rightarrow\infty$,
$\varepsilon_0=O(\frac{1}{\sqrt{d}})=o(1)$,
$\epsilon_0 = O(\frac{1}{\sqrt{d}}
+\frac{1}{\beta d})=o(1)$,
$\epsilon_1=O(
\sqrt{\frac{1}{\beta d}
+\frac{1}{\sqrt{d}}}
)=o(1)$.
To get an intuition of these quantities,
the reader is referred to
theorem \ref{sbmth4},
lemma \ref{sbmlem2}.

\end{remark}
Let
$$
 C(d,r,\lambda,\beta)
=-\frac{\beta d}{(1+r)^2}\cdot
\bigg( \min\big\{\
\frac{(r-1)(1-\lambda)}{1+r}
,\frac{|\lambda|(1+r)^2}{4(r+\lambda)}\
\big\}
-\big(\ 2\beta+\frac{\log 2}{\beta d}
+ 12\max\{\epsilon_0,\epsilon_1\}\
\big)
\bigg).
$$
The following properties
of these quantities are needed.
\begin{proposition}\label{sbmprop2}
\ \\

If $\lambda<0$ then:
\begin{enumerate}
\item under condition \ref{sbmcond1} item 2,
$\epsilon_0\leq x\wedge
\frac{1}{2}\leq y$
or $|y-y^*|\geq \epsilon_1\wedge \frac{1}{2}
 \leq y$
implies
$$
\beta
\cdot \big\{\
[\lambda r(x-y)^2+(x+ry-\frac{1+r}{2})^2]
 \frac{d}{(1+r)^2}+\frac{d}{4}
 \big\}
 \geq \beta \frac{d}{(1+r)^2} C(r,\lambda)+ 2\log 2
 +2\beta\varepsilon_0 d.
 $$

 \item
\begin{align}\nonumber
&\min\big\{\
\frac{x^*}{1+r}d(1+r\lambda)+
\frac{y^*r}{1+r}d(1-\lambda),
\frac{1-x^*}{1+r}d(1+r\lambda)+
\frac{(1-y^*)r}{1+r}d(1-\lambda)
\ \big\}
=
\frac{1+r-y^*r+y^*r\lambda}{1+r}d,
\\ \nonumber
&\min\big\{\
\frac{x^*}{1+r}d(1-\lambda)+
\frac{y^*r}{1+r}d(1+\frac{\lambda}{r}),
\frac{1-x^*}{1+r}d(1-\lambda)+
\frac{(1-y^*)r}{1+r}d(1+\frac{\lambda}{r})
\ \big\}=
\frac{y^*(r+\lambda)}{1+r}d .
\end{align}
\item
\begin{equation}\nonumber
  C(r,\lambda) =
  \left\{
  \begin{aligned}
  &\frac{r^2+2r\lambda +1}{2}\hspace{1cm}
  &&\text{if }r\leq 1-2\lambda;\\ \nonumber
  &\frac{(r+2\lambda)(1+r)^2}{4(r+\lambda)}
  &&\text{if }
  r\geq 1-2\lambda.
  \end{aligned}
  \right.
  \hspace{1cm}\text{ And }
  C(r,\lambda)<\frac{(1+r)^2}{4}.
\end{equation}

\item
$x^*,y^*$ is a minimizer
of $
[r\lambda(x-y)^2+
(x+ry-\frac{1+r}{2})^2]
+\frac{(1+r)^2}{4}$, and
$y^* = 1$ iff $r\leq 1-2\lambda$.

\item
There exists
 two constants depending
 on $\lambda$, namely $C_1(\lambda),
 C_2(\lambda)>0$, such that for any
  $\lambda<0$,
 any $d\geq C_1(\lambda)$,
 any $\frac{2}{\sqrt{d}}\geq \beta\geq \frac{1}{2\sqrt{d}}$
 we have:
 \begin{enumerate}
 \item
  $d,r,\lambda,\beta$ satisfy
 condition \ref{sbmcond1}
 for all $r\geq 1$;
 \item
$r-1\geq \frac{C_2(\lambda)}{d^{\frac{1}{4}}}$
implies
$C(d,r,\lambda,\frac{1}{\sqrt{d}})<0$ and
$\forall 0\leq r'<r
\ (\int_{r'}^r C(d,t,\lambda,\frac{1}{\sqrt{d}})dt
<0)$.
\end{enumerate}
\end{enumerate}

\end{proposition}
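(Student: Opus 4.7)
The argument hinges on item (3), the closed form for $C(r,\lambda)$. Let $F(x,y) := \lambda r(x-y)^2 + (x+ry-\tfrac{1+r}{2})^2 + \tfrac{(1+r)^2}{4}$, so $C(r,\lambda) = \inf_{[0,1]^2} F$. Solving $\nabla F = 0$ gives the unique interior critical point $(\tfrac12,\tfrac12)$ with value $\tfrac{(1+r)^2}{4}$ and Hessian determinant $4r\lambda(1+r)^2 < 0$; it is thus a saddle, and the minimum lies on the boundary. Using the symmetry $F(x,y) = F(1-x,1-y)$ it suffices to analyze $\{x=0\}$ and $\{y=0\}$. On $\{x=0\}$, $F(0,\cdot)$ is a convex quadratic with unconstrained minimizer $\tfrac{1+r}{2(r+\lambda)}$; clipping to $[0,1]$ yields the two cases of item (3). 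On $\{y=0\}$, a sign analysis of $1+\lambda r$ shows the minimum on $[0,1]$ is always at $x=1$ with value $\tfrac{r^2+2\lambda r+1}{2}$. A direct computation shows the difference of the two edge minima factors as $\tfrac{r(r-(1-2\lambda))^2}{4(r+\lambda)} \geq 0$, so $C(r,\lambda)$ is attained on $\{x=0\}$; this gives the closed form in item (3), identifies the minimizer $(x^*,y^*)$ of item (4), and forces $C(r,\lambda) < \tfrac{(1+r)^2}{4}$ term-by-term. Item (2) is then pure substitution: with $x^*=0$, each of the two min expressions is a pair of linear forms in $y^*$, and the ordering reduces to $y^* \geq \tfrac{1+r}{2r(1-\lambda)}$, which is automatic for $\lambda < 0,\,r \geq 1$ in both the interior and boundary subcases of $y^*$.

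Item (1) is the technical crux. Expanding $F$ around $(0,y^*)$ in local coordinates $u=x$, $v=y-y^*$ gives, in the interior case,
\[
F(u,y^*+v) - C(r,\lambda) = (1+\lambda r)u^2 + r(r+\lambda)v^2 + 2r(1-\lambda)uv + \tfrac{-\lambda(1+r)^2}{r+\lambda}\,u,
\]
with an additional linear-in-$v$ term $vr(r+2\lambda-1)$ in the boundary case $y^*=1$. For Case A ($\epsilon_0\leq x$, $\tfrac12 \leq y$), I would minimize over $v$ pointwise: the unconstrained minimizer $v^* = -\tfrac{(1-\lambda)u}{r+\lambda}$ satisfies $y \geq \tfrac12$ once $u \leq \tfrac12$, and substitution yields the sharp value $\tfrac{-\lambda(1+r)^2}{r+\lambda}u(1-u)$; the definition of $\epsilon_0$ is calibrated so that at $u=\epsilon_0$ this exceeds $(1+r)^2(\tfrac{2\log 2}{\beta d}+2\varepsilon_0)$, and the remaining strip $u\in[\tfrac12,1]$ is handled via the line $y=\tfrac12$, where $F(u,\tfrac12) - C \geq \tfrac{-\lambda(1+r)^2}{4(r+\lambda)}$ (clause 2(c) of condition \ref{sbmcond1} supplies the requisite comparison). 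For Case B ($|y-y^*|\geq \epsilon_1$, $\tfrac12 \leq y$), the slice $\{x=0\}$ gives $F(0,y)-C = r(r+\lambda)(y-y^*)^2$ in the interior subcase and a positive linear-plus-quadratic expression in $v$ of leading rate $r|r+2\lambda-1|$ in the boundary subcase, and the two-argument $\min$ defining $\epsilon_1$ is tuned precisely so each subcase returns the required bound; the extension to $x>0$ is an analogous one-dimensional minimization in $u$ at fixed $v$. The main obstacle here is coordinating the interior-$y^*$ and boundary-$y^*$ subcases with the constraints $x,y\in[0,1]$, $y\geq \tfrac12$, which is exactly what forces the complicated $\min$ structure in the definitions of $\epsilon_0$ and $\epsilon_1$.

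Item (5) is asymptotic bookkeeping. By item (3), each of $\tfrac{(1+r)^2}{4} - C(r,\lambda)$, $\tfrac{-\lambda(1+r)^2}{r+\lambda}$, $\tfrac{|\lambda|(1+r)^2}{4(r+\lambda)}$, and $r^2+r\lambda$ is bounded below by a positive function of $\lambda$ alone, uniformly for $r \geq 1$; hence for $\beta \in [\tfrac{1}{2\sqrt d},\tfrac{2}{\sqrt d}]$ one obtains $\varepsilon_0 = O(d^{-1/2})$, $\epsilon_0 = O(d^{-1/2})$, and $\epsilon_1 = O(d^{-1/4})$ with constants depending only on $\lambda$. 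Each clause of condition \ref{sbmcond1} then reads ``a positive $\lambda$-constant exceeds an $O(d^{-1/4})$ quantity'', which holds for $d$ at least some $C_1(\lambda)$. For part (b), write
\[
C\bigl(d,r,\lambda,\tfrac{1}{\sqrt d}\bigr) = \tfrac{\sqrt d}{(1+r)^2}\Bigl(\mathrm{err}(d,\lambda) - \min\bigl\{\tfrac{(r-1)(1-\lambda)}{1+r},\ \tfrac{|\lambda|(1+r)^2}{4(r+\lambda)}\bigr\}\Bigr)
\]
with $\mathrm{err}(d,\lambda) = O(d^{-1/4})$. The second inner argument is bounded below by a positive $\lambda$-constant uniformly in $r$, while the first exceeds $\mathrm{err}$ as soon as $r-1 \geq C_2(\lambda)/d^{1/4}$, yielding $C<0$. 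For the integral claim $\int_{r'}^r C\,dt < 0$ one tracks the sign of $C$ as a function of $t$ and bounds the (transient) positive contribution near $t=1$ against the negative tail, which is of magnitude $\Omega(\sqrt d)\,(r-1-C_2/d^{1/4})$; choosing $C_2(\lambda)$ large enough in terms of the integrated positive part makes the tail dominate. The main challenge in this item — and the hardest bookkeeping in the proposition — is ensuring $C_1$ and $C_2$ depend only on $\lambda$, which once again reduces to the uniform-in-$r$ lower bounds from item (3).
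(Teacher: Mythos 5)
Your proposal follows essentially the same strategy as the paper: items 2--4 by direct computation of the constrained minimum of the quadratic form $f(x,y)$, item 1 by locating the constrained minimizers on the boundary of the relevant regions and invoking the calibrated definitions of $\epsilon_0,\epsilon_1$ together with clauses 2(a)--(c) of Condition 2.3, and item 5 by uniform-in-$r$ positive lower bounds on the $\lambda$-dependent quantities. Your execution of item 1, Case A is actually cleaner than the paper's: where the paper reduces to candidate points $(\epsilon_0,\tilde y)$ and $(\tfrac12,\tfrac12)$ and then chases signs of first and second partials to control $f(\epsilon_0,y^*)-f(\epsilon_0,\tilde y)$ (this is where the paper consumes clause 2(c)), you minimize out $v$ explicitly and obtain the closed form $\tfrac{-\lambda(1+r)^2}{r+\lambda}\,u(1-u)$, which absorbs the optimal $\tilde y$ automatically; the price is that you need $\epsilon_0\le\tfrac12$ so that $\epsilon_0(1-\epsilon_0)\ge\epsilon_0/2$ matches the factor $2$ built into the definition of $\epsilon_0$, a point worth stating. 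Two small corrections: the strip $u\in[\tfrac12,1]$, $y=\tfrac12$ is controlled by clause 2(a) (which asserts $\tfrac{(1+r)^2}{4}-C(r,\lambda)>(r+1)^2(\tfrac{2\log 2}{\beta d}+2\varepsilon_0)$), not clause 2(c) as you cite; and in Case B you should note, as the paper does via clause 2(b), that $\tfrac{\partial}{\partial x}f(0,y^*-\epsilon_1)>0$ is what justifies restricting to the slice $x=0$ for $y\ge y^*-\epsilon_1$, while the corner $(\tfrac12,\tfrac12)$ must be retained as a candidate for smaller $y$. Finally, in item 5(b) your accounting of the ``transient positive contribution near $t=1$'' only covers $t\ge 1$; for $r'<1$ the integrand $C(d,t,\lambda,\tfrac{1}{\sqrt d})$ is positive on all of $[r',1)$ and contributes up to order $\sqrt d$, which your negative tail of magnitude $\Omega(\sqrt d)(r-1-C_2/d^{1/4})$ does not dominate when $r-1$ is of order $d^{-1/4}$ --- but the paper's own proof dismisses this step as ``obvious'' and the claim is only ever applied with $r'\ge 1$ (Corollary 2.4), so this is a defect of the statement's quantifier rather than of your argument specifically.
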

The proof of proposition
\ref{sbmprop2} and  other
lemmas, propositions
in this subsection are all
 delayed to section \ref{sbmsec3}.
The following theorem establish the
upper derivative of the log partition
function with respect to $r$ when
$\lambda<0$.

\begin{theorem}\label{sbmth1}

Consider this function of
$d,\lambda,r$ and $\beta$,
$ \oE_{G\sim SBM(d,\lambda,r,n)}
 \big[\ \frac{1}{n}Z(\beta,G)\ \big]$.

 \begin{enumerate}

\item  Let $G_n\sim SBM(d,\lambda, r,n)$,
then $\big|
\frac{1}{n}Z(\beta,G_n)
-
\oE[\frac{1}{n}Z(\beta,G_n)]
\ \big|=O_p(\frac{1}{\sqrt{n}})$.

 \item If $\lambda<0$ then
 under condition \ref{sbmcond1}
 $$
\limsup\limits_{\delta\rightarrow 0}
\ \limsup\limits_{n\rightarrow\infty}
\frac{1}{\delta}
\bigg(\  \oE_{G\sim SBM(d,\lambda,r+\delta,n)}
 \big[\ \frac{1}{n}Z(\beta,G)\ \big]
 -\oE_{G\sim SBM(d,\lambda,r,n)}
 \big[\ \frac{1}{n}Z(\beta,G)\ \big]
 \bigg)
\leq
C(d,\lambda,r,\beta).
$$

\item
$\oE_{G\sim SBM(d,\lambda,r,n)}
 \big[\ \frac{1}{n}Z(\beta,G)\ \big]$
 is Lipschitz continuous
 in $d,\lambda,r$ and $\beta$ uniformly in $n$.
 i.e.,
\begin{align}\nonumber
&\limsup\limits_{\delta\rightarrow 0}
\ \limsup\limits_{n\rightarrow\infty}
\frac{1}{\delta}
\bigg|
\oE_{G\sim SBM(d+\delta,\lambda,r,n)}
 \big[\ \frac{1}{n}Z(\beta,G)\ \big]
 -
 \oE_{G\sim SBM(d,\lambda,r,n)}
 \big[\ \frac{1}{n}Z(\beta,G)\ \big]
 \bigg|,\\ \nonumber
&\limsup\limits_{\delta\rightarrow 0}
\ \limsup\limits_{n\rightarrow\infty}
\frac{1}{\delta}
\bigg|
\oE_{G\sim SBM(d,\lambda+\delta,r,n)}
 \big[\ \frac{1}{n}Z(\beta,G)\ \big]
 -
 \oE_{G\sim SBM(d,\lambda,r,n)}
 \big[\ \frac{1}{n}Z(\beta,G)\ \big]
 \bigg|,
 \\ \nonumber
& \limsup\limits_{\delta\rightarrow 0}
\ \limsup\limits_{n\rightarrow\infty}
\frac{1}{\delta}
\bigg|
\oE_{G\sim SBM(d,\lambda,r+\delta,n)}
 \big[\ \frac{1}{n}Z(\beta,G)\ \big]
 -
 \oE_{G\sim SBM(d,\lambda,r,n)}
 \big[\ \frac{1}{n}Z(\beta,G)\ \big]
 \bigg|,
 \\ \nonumber
& \limsup\limits_{\delta\rightarrow 0}
\ \limsup\limits_{n\rightarrow\infty}
\frac{1}{\delta}
\bigg|
\oE_{G\sim SBM(d,\lambda,r,n)}
 \big[\ \frac{1}{n}Z(\beta+\delta,G)\ \big]
 -
 \oE_{G\sim SBM(d,\lambda,r,n)}
 \big[\ \frac{1}{n}Z(\beta,G)\ \big]
 \bigg|\\ \nonumber
<&\infty.
\end{align}
\end{enumerate}
\end{theorem}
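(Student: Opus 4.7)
The plan is to handle the three parts separately; part~(1) is a standard concentration estimate, part~(3) follows from direct couplings, and part~(2) is the genuine work. For the concentration in part~(1), I would apply Efron--Stein to $Z(\beta,G)$ regarded as a function of the $\binom{n}{2}$ independent edge indicators. A single edge flip alters $J(\sigma;G)$ by at most $1$ uniformly in $\sigma$, so the partition sum is multiplied by a factor in $[e^{-\beta},1]$, giving $|Z-Z_e'|\le\beta\le 1$. Since every edge is present with probability $O(1/n)$, the resampled copy agrees with the original with probability $1-O(1/n)$, so each Efron--Stein term is $O(\beta^2/n)$; summing over $\binom{n}{2}$ edges gives $\mathrm{Var}(Z)=O(\beta^2 dn)$, and Chebyshev yields $|Z/n-\oE[Z/n]|=O_p(1/\sqrt n)$.

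For the Lipschitz property in part~(3), the three parameters $d,\lambda,r$ all admit a direct coupling: perturb one of them by $\delta$ and realize the two $\SBM$'s on a common vertex set so that the expected symmetric difference of their edge sets is $O(\delta n)$; since one edge flip moves $Z$ by at most $\beta$, the expected difference of $Z/n$ is $O(\beta\delta)$. The Lipschitz constant in $\beta$ is even simpler: $\partial_\beta Z(\beta,G)=-\langle J(\sigma;G)\rangle_\beta$ is bounded pointwise by $|E(G)|$, whose expectation is $O(n)$.

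The heart of the argument is part~(2), for which I plan a \emph{dominant-type-class} approach. Conditional on the true label vector $\tau$, every $\sigma\in\{0,1\}^V$ is classified by its joint type with $\tau$, summarized by $x=|\{i:\tau_i=\sigma_i=0\}|/n_0$ and $y=|\{i:\tau_i=1,\sigma_i=0\}|/n_1$. A direct computation shows that the conditional mean of $J(\sigma;G)$ given $\tau$ and $(x,y)$ equals $\frac{dn}{(1+r)^2}\bigl[r\lambda(x-y)^2+(x+ry-\tfrac{1+r}{2})^2\bigr]+\tfrac{dn}{4}+o(n)$, and Chernoff supplies exponential concentration at scale $\varepsilon_0$. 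Combined with the binomial entropy $\exp\{nH(x)/(1+r)+rnH(y)/(1+r)\}$ counting $\sigma$'s inside each type class, Condition~\ref{sbmcond1} together with Proposition~\ref{sbmprop2}(1),(2),(4) is engineered precisely to force the sum defining $Z$ to be dominated, up to $o(n)$, by those $(x,y)$ lying within distance $\epsilon_0$ in $x$ and $\epsilon_1$ in $y$ of the explicit minimizer $(x^*,y^*)=(0,\min\{(r+1)/(2(r+\lambda)),1\})$.

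Once the dominant type class is isolated, $\oE[Z/n]$ can be replaced by an explicit closed-form function of $(d,\lambda,r,\beta)$ up to a uniform $o(1)$; differentiating in $r$ and splitting into the two cases $y^*=1$ and $y^*<1$ identified in Proposition~\ref{sbmprop2}(4) recovers exactly the two branches of the $\min$ in $C(d,r,\lambda,\beta)$, the $(1+r)^{-2}$ prefactor arising from $n_0 n_1/n^2$. The main obstacle I anticipate is the exchange of the $\delta\to 0$ and $n\to\infty$ limits: I plan to use the Lipschitz control from part~(3) together with the uniformity of the Chernoff bounds on $(x,y)$ over a small neighbourhood of $r$, so that the $\limsup_\delta$ collapses to the one-sided derivative of the explicit leading term and yields the stated $C(d,r,\lambda,\beta)$.
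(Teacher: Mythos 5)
Your treatments of parts (1) and (3) are sound and essentially equivalent to the paper's (which invokes Azuma for (1) and a coupling of edge indicators for (3)); the problem is part (2). The gap is structural: a dominant-type-class computation of the kind you describe can only deliver $\oE_{G\sim SBM(d,\lambda,r,n)}[\frac{1}{n}Z(\beta,G)] = F(d,\lambda,r,\beta) + h_n(r)$ where the error $h_n(r)$ is \emph{not} $o(1)$ in $n$. The two sources of error are exactly the quantities $\varepsilon_0$ and $\epsilon_0,\epsilon_1$ of the paper: the Chernoff/union bound over all $2^n$ configurations controls $|J(\sigma;G)-\oE[J(\sigma;G)]|$ only up to $\frac{3}{4}\varepsilon_0 dn$, and the Ising measure concentrates $(x(\sigma),y(\sigma))$ only in an $(\epsilon_0,\epsilon_1)$-window around $(x^*,y^*)$. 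These are fixed positive constants for fixed $d,\beta,r,\lambda$ (they vanish only as $d\to\infty$), so $h_n$ is a constant-size additive error. A uniform additive approximation of a function says nothing about its difference quotients: $\frac{1}{\delta}\big(h_n(r+\delta)-h_n(r)\big)$ can be of order $\|h_n\|_\infty/\delta$, which diverges as $\delta\to 0$, and the Lipschitz bound from part (3) only confines the derivative to a bounded interval rather than forcing it to agree with $F'(r)$. So "differentiate the explicit leading term" does not yield the stated inequality.

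The paper's proof is built precisely to avoid this. It never approximates $\oE[\frac1n Z]$ itself; instead it interpolates between $SBM(d,\lambda,r,n)$ and $SBM(d,\lambda,r+\delta,n)$ through a common graph $\tilde G$, adding $[\frac{\delta n}{(1+r)^2}]$ nodes to $N_0$ (yielding $G_0'$) or to $N_1$ followed by an edge adjustment (yielding $G_1'$), and bounds each one-node or one-edge increment of $Z$ via the cavity identity (\ref{sbmeq7}), Jensen's inequality (\ref{sbmeqlog}), and Lemma \ref{sbmlem2}. In that scheme the non-vanishing slacks $\varepsilon_0,\epsilon_0,\epsilon_1$, the $\log 2$ entropy per added node, and the $O(\beta^2)$ expansion error each enter \emph{per step}, so after summing over $O(\delta n)$ steps they contribute $O(\delta n)$ and are absorbed into the constant $C(d,r,\lambda,\beta)$ (this is why $C$ explicitly contains $2\beta+\frac{\log 2}{\beta d}+12\max\{\epsilon_0,\epsilon_1\}$), while the coupling error to the true SBMs is $O(\sqrt n+\delta^2 n)$ (Proposition \ref{sbmprop5}) and dies under $\limsup_{n}$ followed by $\limsup_{\delta}$. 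If you want to salvage your route, you would have to show that $h_n(r)$ is itself differentiable in $r$ with a controllably small derivative, which is exactly the original problem; the type-class analysis you propose is, however, the right ingredient for Lemma \ref{sbmlem2}, i.e., as an input to the interpolation rather than as a substitute for it.
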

Before proving theorem \ref{sbmth1},
we give a direct application of theorem
\ref{sbmth1}
providing
the following consistent estimator of $r$.
\begin{corollary}\label{sbmcol1}
If $\lambda<0$,
 $d\geq C_1(\lambda), r-1\geq \frac{C_2(\lambda)}
{d^{\frac{1}{4}}}$, then
the following estimator of $r$ is
consistent
:
\begin{align}\nonumber
\hat{r} =
\bigg(
\oE_{G\sim SBM(\hat{d},\hat{\lambda},
r,n)}
 \big[\ \frac{1}{n}(Z(\frac{1}{\sqrt{\hat{d}}},G))\ \big]
\bigg)^{-1}
\big(
\frac{1}{n}
Z(\frac{1}{\sqrt{\hat{d}}}, G_n)
\big).
\end{align}
Here
$\oE_{G\sim SBM(d,\lambda,
r,n)}
 \big[\ \frac{1}{n}(Z(\beta,G))\ \big]$
 is regarded as a single variable
  function in $r$
 and $()^{-1}$ denote its
 inverse. Constants $C_1(\lambda),C_2(\lambda)$
 are defined in proposition \ref{sbmprop2}
 conclusion 5.

\end{corollary}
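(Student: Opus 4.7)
Set $\beta := 1/\sqrt{d}$ and $\hat\beta := 1/\sqrt{\hat d}$, and introduce the population log-partition functions
\[
g_n(\rho) := \oE_{G\sim\SBM(d,\lambda,\rho,n)}\bigl[\tfrac{1}{n}Z(\beta,G)\bigr],
\qquad
f_n(\rho) := \oE_{G\sim\SBM(\hat d,\hat\lambda,\rho,n)}\bigl[\tfrac{1}{n}Z(\hat\beta,G)\bigr],
\]
so that $\hat r = f_n^{-1}\bigl(\tfrac{1}{n}Z(\hat\beta,G_n)\bigr)$ by definition. The strategy has four steps: (a) show that $g_n$ is asymptotically strictly decreasing on compacts of $[1+C_2(\lambda)/d^{1/4},R]$; (b) show that $f_n$ is uniformly close to $g_n$ on such compacts; (c) show that $\tfrac{1}{n}Z(\hat\beta,G_n)$ concentrates at $g_n(r)$; and (d) invert.

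For step (a), Theorem~\ref{sbmth1}(3) provides uniform (in $n$) Lipschitz continuity of $\rho\mapsto g_n(\rho)$. Combining this with the upper-derivative bound of Theorem~\ref{sbmth1}(2) via a mesh argument (partition $[\rho_1,\rho_2]$, apply the $\limsup_\delta\limsup_n$ bound on each subinterval, use the uniform Lipschitz constant to control the finite-$\delta$ error, and pass to the mesh limit) yields
\[
\limsup_{n\to\infty}\bigl[g_n(\rho_2)-g_n(\rho_1)\bigr]\;\le\;\int_{\rho_1}^{\rho_2}C(d,\lambda,t,\beta)\,dt,\qquad 1\le\rho_1<\rho_2\le R.
\]
By Proposition~\ref{sbmprop2}(5b), the right-hand side is strictly negative, with a quantitative gap, whenever $\rho_2-1\ge C_2(\lambda)/d^{1/4}$. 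Hence for $n$ large enough, $g_n$ is strictly decreasing on $[1+C_2(\lambda)/d^{1/4},R]$.

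Step (b) is immediate from Theorem~\ref{sbmth1}(3), which gives uniform Lipschitz continuity in $d,\lambda,\beta$, together with $\hat d\to d$ at rate $\sqrt n$ and $\hat\lambda\to\lambda$ in probability (Proposition~\ref{sbmprop4}); note that Proposition~\ref{sbmprop2}(5a) guarantees that condition~\ref{sbmcond1} holds for all $\rho\ge 1$ in the range of interest, so $C(d,\lambda,\rho,\beta)$ is well-defined along the plug-in path. Step (c) combines the $O_p(n^{-1/2})$ concentration of Theorem~\ref{sbmth1}(1) with Lipschitz continuity in $\beta$ to replace $\hat\beta$ by $\beta$, giving $\tfrac{1}{n}Z(\hat\beta,G_n)=g_n(r)+o_p(1)$. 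For step (d), argue by contradiction: on $\{\hat r\ge r+\varepsilon\}$, steps (b)-(c) yield $g_n(\hat r)=g_n(r)+o_p(1)$, while step (a) forces $g_n(\hat r)\le g_n(r)-\eta+o_p(1)$ for some $\eta=\eta(\varepsilon)>0$, a contradiction; the case $\hat r\le r-\varepsilon$ is symmetric. An a priori bound $\hat r\in[1,R]$ with high probability comes from $\tfrac{1}{n}Z\in[0,\log 2]$ combined with the quantitative monotonicity gap.

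The principal obstacle is step (a): Theorem~\ref{sbmth1}(2) delivers only an iterated $\limsup_\delta\limsup_n$ inequality, so one must transfer it to an integrated bound over a macroscopic interval while keeping the $n$-limit on the outside. This is precisely where the uniform Lipschitz bound of Theorem~\ref{sbmth1}(3) is essential: it controls the finite-$\delta$ correction on each subinterval independently of $n$, letting the Riemann sum attached to the mesh approximate $\int C\,dt$ uniformly and allowing term-by-term passage to the outer $\limsup_n$. A minor secondary subtlety is that $f_n$ need not be globally invertible for every finite $n$, so $\hat r$ should be interpreted as any measurable selection from the preimage $f_n^{-1}\bigl(\tfrac{1}{n}Z(\hat\beta,G_n)\bigr)$; the argument above shows every such selection converges to $r$.
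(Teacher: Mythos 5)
Your proposal follows essentially the same route as the paper's own proof: injectivity of $\rho\mapsto\oE[\frac{1}{n}Z]$ from Theorem \ref{sbmth1}(2) combined with Proposition \ref{sbmprop2}(5), concentration from Theorem \ref{sbmth1}(1), and continuity in $d,\lambda,\beta$ from Theorem \ref{sbmth1}(3) to absorb the plug-in of $\hat d,\hat\lambda$. The only difference is that you spell out the mesh argument converting the iterated $\limsup$ derivative bound into an integrated monotonicity statement (and the measurable-selection caveat), details the paper's terse proof leaves implicit; your write-up is correct.
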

 \begin{proof}

By theorem \ref{sbmth1}
conclusion 2 and proposition \ref{sbmprop2}
conclusion 5, for all
 $r-1\geq \frac{C_2(
 \lambda)}{d^{\frac{1}{4}}},
 d\geq C_1(\lambda)$, for any
 $r'\geq 1$,
$ \oE_{G\sim SBM(d,\lambda,
r,n)}
 \big[\ \frac{1}{n}Z(\frac{1}{\sqrt{d}},G)\ \big]
 =\oE_{G\sim SBM(d,\lambda,
r',n)}
 \big[\ \frac{1}{n}Z(\frac{1}{\sqrt{d}},G)\ \big]$
 implies $r' = r$. So
  the inverse function
 is well defined at
 $\frac{1}{n}Z(\frac{1}{\sqrt{d}},
 G_n)$ with large probability.
 $\frac{1}{n}Z(\frac{1}{\sqrt{d}},G)$
 has fluctuation of order $\frac{1}{\sqrt{n}}$
 by theorem \ref{sbmth1} conclusion 1,
 so it is closed to its expectation
 with large probability.
 Finally the
 conclusion follows by noting
 that
 $\hat{d},\hat{\lambda}$ are
 consistent estimator of
 $d,\lambda$ and
  $\oE_{G\sim SBM(d,\lambda,
r,n)}
 \big[\ \frac{1}{n}Z(\frac{1}{\sqrt{d}},G)\ \big]$
 is continuous
in $d,\lambda$ by theorem \ref{sbmth1} conclusion 3.

 \end{proof}

 \begin{proof}[Proof of theorem \ref{sbmth1}]

Conclusion 1 of theorem \ref{sbmth1}
follows by  concentration inequality
such as Azuma's inequality and its proof is
therefore omitted.
Conclusion $3$ of theorem \ref{sbmth1}
follows in the same way
as conclusion $2$ and its proof is
 therefore omitted.
Now we focus on the proof of conclusion 2
of theorem \ref{sbmth1}.
 Through out the proof, fix $\delta$
 to be a sufficiently small
positive
constant which will be smaller
than any other constant whenever necessary.

 We prove theorem \ref{sbmth1} conclusion
 2 by evaluating
$$\oE_{G\sim SBM(d,\lambda,r+\delta,n)}
\big[
Z(\beta,G)
\big]-
\oE_{G\sim SBM(d,\lambda,r,n)}
\big[Z(\beta,G)
\big]\ $$
for small $\delta$. To this end,
we adopt the variational method.
We firstly construct a graph $\tilde{G}$.
Based on $\tilde{G}$, we then
inductively construct
three sequences of random graph
$G_{0,i},G_{1,i},
,G'_{1,j}$ by adding
nodes,
 deleting edges or
adding edges.
Through these sequences we obtain
two graphs $G_0',G_1'$
(see definition \ref{sbmdefG}).
We argue
 that
 the marginal law of
 $G_0'$ $(G_1')$ 
is  sufficiently close to
$SBM(d,\lambda, r, n)$
($
SBM(d,\lambda,r+\delta,n) $)
(see proposition \ref{sbmprop5}).
Finally we evaluate
$\oE\big[
Z(\beta,G_0')-
Z(\beta,G_1')
\big]$
by evaluating
$\oE\big[
Z(\beta,G_{h,i+1})-Z(\beta,G_{h,i})
\big]$ (lemma \ref{sbmmainlem1}),
$\oE\big[
Z(\beta,G'_{h,i+1})-Z(\beta,G'_{h,i})
\big]$ (lemma \ref{sbmlem3}),
for $h\in\{0,1\}$.

More specifically,
$\tilde{G}$ is generated according
to a SBM consisting of two
communities $N_0,N_1$ with
$|N_0| \approx \frac{n}{1+r}-\frac{\delta n}{(1+r)^2}$,
$|N_1|\approx \frac{nr}{1+r} $.
The within-community
connection probability
of $N_0,N_1$ are
$\frac{d(1+r\lambda)}{n}$,
$\frac{d(1+\frac{\lambda}{r})}{n}$
respectively;
and the across community
 connection probability
 is $\frac{d(1-\lambda)}{n}$.
$G'_0$ is constructed by adding
$\frac{\delta n}{(1+r)^2}$ many
nodes to community $N_0$ and connect
 each new node with every old
node according
to model $SBM(d,\lambda ,r)$. i.e.,
connect each new node with every old node,
say $u$,
with probability
$\frac{d(1+r\lambda)}{n}$ if $u\in N_0$,
probability
$\frac{d(1-\lambda)}{n}$ if
$u\in N_1$.
$G'_1$ is constructed by firstly
adding $\frac{\delta n}{(1+r)^2}$ many
nodes to community $N_1$ and
connect each new node with every old node
according to model $SBM(d,\lambda, r+\delta)$.
i.e., connect each new node with every
old node, namely $u$, with
probability $\frac{d(1-\lambda)}{n}$
if $u\in N_0$, probability
$\frac{d(1+\frac{\lambda}{r})}{n}$
if $u\in N_1$.
Then adjust the connection probability
of $G'_1$ by deleting $[\frac{d|\lambda| \delta n}{2(1+r)^2}]$
many edges
in $e_{G'_1}(N_0,N_0)$ uniformly at
random and adding
$[\frac{d|\lambda| \delta n}{2(1+r)^2}]$
many disconnected pairs, $\{u,v\}$, with
$u,v\in N_1$ uniformly at random. Note that
no new nodes are connected. The precise definition
is as follows.

\begin{definition}\label{sbmdefG}
[Construction of $\tilde{G},
G_{0,i},G_{1,i},G'_{1,i}$]


\begin{itemize}
\item $\tilde{G}$: $V(\tilde{G}) = N_0\cup N_1$
where
$N_0=\{1,2,\cdots, \frac{n}{1+r}
-[\frac{\delta n}{(1+r)^2}]\}$,
$
N_1=\{\frac{n}{1+r}
-[\frac{\delta n}{(1+r)^2}]+1,
\cdots, n
-[\frac{\delta n}{(1+r)^2}]\}$;

$\tilde{G}$ is the following
random graph:
presence of edges
are independent and  
\begin{equation}
\mbbP(\{u,v\}\in E(G)) =
\left\{
\begin{aligned}
&\frac{d(1+r\lambda)}{n} \hspace{1cm}
\text{if }u,v\in N_0,\\
&\frac{d(1+\frac{\lambda}{r})}{n}
\hspace{1cm}
\text{if }u,v\in N_1,\\
&\frac{d(1-\lambda)}{n} \hspace{1cm}
\text{if }u\in N_0,v\in N_1
\text{ or }u\in N_1,v\in N_0.\\
\end{aligned}
\right.
\end{equation}

\item $G_{0,0}=G_{1,0} = \tilde{G}$;
set $k_i=n-[\frac{\delta n}{(1+r)^2}]+i$.
For $i\leq [\frac{\delta n}{(1+r)^2}]$,

$G_{0,i+1}:$
$V(G_{0,i+1}) = V(G_{0,i})\cup
\{k_i\}$;
let $X_{0,iu},u\in V(\tilde{G})$
be independent random variables with
$X_{0,iu}\sim Bin(1,\frac{d(1+r\lambda)}{n})$
if $u\in N_0$ and
$X_{0,iu}\sim Bin(1,\frac{d(1-\lambda)}{n})$
if $u\in N_1$;
$E(G_{0,i+1})=E(G_{0,i})\cup
\big\{\ \{k_i,u\}: X_{0,iu}=1\ \big\}$.

$G_{1,i+1}:$
$V(G_{1,i+1}) = V(G_{1,i})\cup
\{k_i\}$;
let $X_{1,iu},u\in V(\tilde{G})$
be independent binary random variables with
$X_{1,iu}\sim Bin(1,\frac{d(1-\lambda)}{n})$
if $u\in N_0$ and
$X_{1,iu}\sim Bin(1,\frac{d(1+\frac{\lambda}{r})}{n})$
if $u\in N_1$;
$E(G_{1,i+1})=E(G_{1,i})\cup
\big\{\ \{k_i,u\}: X_{1,iu}=1\}$.

\item
$
G'_{1,0}=G_{1,[\frac{\delta n}{(1+r)^2}]}$.
For $i\leq
[\frac{d|\lambda| \delta n}{2(1+r)^2}]
$,

$G'_{1,i+1}:$
$V(G'_{1,i+1})= V(G'_{1,0})$;
$E(G'_{1,i+1})$ is obtained
by deleting
an edge in $e_{G'_{1,i}}
(N_0,N_0)$,
namely $\{u_i,v_i\}$,  uniformly at random.

Then for $[\frac{d|\lambda| \delta n}{2(1+r)^2}]
+1\leq j\leq 2
[\frac{d|\lambda| \delta n}{2(1+r)^2}]
$,
$G'_{1,j+1}$: $V(G'_{1,j+1}) = V(G'_{1,0})$;
$E(G'_{1,j+1})$ is obtained
by adding a disconnected pair $\{u_j,v_j\}$
with $u_j,v_j\in N_1$
to $E(G'_{1,j})$  uniformly at random.

\item Denote
$G_{0,[\frac{\delta n}{(1+r)^2}]}$
by $G'_0$
and $G'_{1,2[\frac{d|\lambda| \delta n}{2(1+r)^2}]}$
 by $G'_1$.
\end{itemize}
\end{definition}

\begin{proposition}\label{sbmprop5}
 \begin{align}\nonumber
& \bigg|
\oE[\ Z(\beta,G'_0)
\ ]-
\oE_{G\sim SBM(d,\lambda,r,n)}
[Z(\beta,G)]\ \bigg|,\\ \nonumber
&\bigg|
 \oE[\ Z(\beta,G'_{1})
\ ]-
\oE_{G\sim SBM(d,\lambda,r+\delta,n)}
[Z(\beta,G)]
\ \bigg|\\ \nonumber
=&O(\sqrt{n}+\delta^2 n).
\end{align}
\end{proposition}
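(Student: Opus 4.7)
The log partition function $Z(\beta,G)$ is $\beta$-Lipschitz in the edge set: flipping a single edge changes $J(\sigma,G)$ by at most $1$ uniformly in $\sigma$, so $|Z(\beta,G_1)-Z(\beta,G_2)|\le\beta\cdot|E(G_1)\Delta E(G_2)|$ whenever $V(G_1)=V(G_2)$. A direct count shows that $G'_0$ and $G'_1$ each have exactly $n$ vertices, so it suffices to construct couplings of $G'_0$ with $G^{(0)}\sim SBM(d,\lambda,r,n)$ and of $G'_1$ with $G^{(1)}\sim SBM(d,\lambda,r+\delta,n)$, each having expected edge symmetric difference $O(\sqrt n+\delta^2 n)$; the Lipschitz bound then converts this into the claimed inequality.

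\textbf{Coupling for $G'_0$.} By inspection of Definition \ref{sbmdefG}, each potential edge of $G'_0$ between an original-original or original-new vertex pair is an independent Bernoulli with exactly the $SBM(d,\lambda,r)$ probability for its endpoint types; only the $O(\delta^2 n)$ pairs among the $[\delta n/(1+r)^2]$ new vertices are forced absent by the construction, contributing to the $\delta^2 n$ slack. Conditional on the random community size $M_0\sim\mathrm{Bin}(n,1/(1+r))$ of $G^{(0)}$ matching the deterministic $|N_0|$ of $G'_0$, the remaining pairs have identical laws. I would couple by drawing $M_0$ and, on the high-probability event $\bigl|M_0-|N_0|\bigr|\le C\sqrt n$ (whose complement contributes $O(1)$ to $\oE[Z]$ via the trivial bound $Z\le n\log 2$), relabel $O(\sqrt n)$ vertices and resample their $O(\sqrt n\cdot d)=O(\sqrt n)$ incident edges, yielding the desired bound.

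\textbf{Coupling for $G'_1$.} This is the main work. I would first verify that the deterministic recipe in $G'_1$ reproduces $SBM(d,\lambda,r+\delta,n)$ to first order in $\delta$. Concretely, $\tfrac{n}{1+r+\delta}=\tfrac{n}{1+r}-\tfrac{\delta n}{(1+r)^2}+O(\delta^2 n)$ matches $|N_0|$; the deleted within-$N_0$ count $[d|\lambda|\delta n/(2(1+r)^2)]$ equals, up to $O(1)$, the expected edge loss $\binom{|N_0|}{2}\cdot d|\lambda|\delta/n$ caused by replacing $1+r\lambda$ by $1+(r+\delta)\lambda$ (this is where $\lambda<0$ enters); the added within-$N_1$ count matches $\binom{|N_1|}{2}$ times the probability increment $d|\lambda|\delta/(r^2 n)$; and the across probability $d(1-\lambda)/n$ is $r$-independent, so only the community-size shift contributes, again at $O(\delta^2 n)$. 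For each of the three edge types I would then use a monotone coupling between the ``binomial $\pm$ fixed-number of uniform deletions/additions'' law of $G'_1$ and the genuine binomial law of $G^{(1)}$; since the means agree up to $O(\delta^2 n)$ and the standard deviations are $O(\sqrt n)$, each type contributes $O(\sqrt n+\delta^2 n)$ expected symmetric difference. Combining with the random-versus-deterministic community-size coupling of the previous paragraph gives the claim.

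\textbf{Main obstacle.} The hard part is the Taylor-expansion bookkeeping for all three edge types and the community sizes simultaneously, so that the deterministic corrections built into $G'_1$ exactly absorb the first-order $\delta$-change in mean for each type and leave only an $O(\delta^2 n)$ bias plus the inherent $O(\sqrt n)$ fluctuation; one must also check that the floors in Definition \ref{sbmdefG} and the absence of new-new edges among the added vertices each cost at most $O(\delta^2 n)$ per type, which is where the asymmetric error rate $O(\sqrt n+\delta^2 n)$ (rather than the cleaner $O(\sqrt n)$ available for $G'_0$ alone) ultimately comes from.
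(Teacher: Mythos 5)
Your proposal is correct and follows essentially the same route as the paper: the $\beta$-Lipschitz bound $|Z(\beta,G_1)-Z(\beta,G_2)|\le\beta\,|E(G_1)\Delta E(G_2)|$ combined with pairwise-independent couplings whose expected edge symmetric difference is $O(\sqrt n+\delta^2 n)$, where for $G'_1$ the fixed-count deletions/additions are compared to independent Bernoulli versions (the paper's auxiliary graph $G''_1$) with the mismatch controlled by the $O(\sqrt n)$ binomial fluctuation plus the $O(\delta^2 n)$ mean discrepancy. One cosmetic repair: for a fixed constant $C$ the event $\bigl\{\,\bigl||M_0|-|N_0|\bigr|>C\sqrt n\,\bigr\}$ has constant probability, so the trivial bound $Z\le n\log 2$ on that event yields $O(n)$ rather than $O(1)$; instead bound $\oE\bigl[\,\bigl||M_0|-|N_0|\bigr|\,\bigr]=O(\sqrt n)$ directly and multiply by the $O(d)$ expected cost per relabeled vertex, as the paper does.
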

The term $\sqrt{n}$ is due to
the fluctuation of community size.
The term $\delta^2 n$ is due
to the approximation error of connection
probability. For instance,
there is no connection among "new
nodes" in $G'_1,G'_0$ while the
expected number of edges among "new nodes"
 should be
$O(\delta^2 n)$.
For instance, the expected number of
edges in the two communities in $G'_1$
are $\frac{d(1+(r+\delta)\lambda)}
{2(1+r+\delta)^2}n+O(\delta^2n)$,
$\frac{d(1+\frac{\lambda}{r+\delta})(r+\delta)^2}
{2(1+r+\delta)^2}n+O(\delta^2 n)$.

Now we evaluate
$\oE\big[
Z(\beta,G_{h,i+1})-Z(\beta,G_{h,i})
\big],
\oE\big[
Z(\beta,G'_{1,i+1})-Z(\beta,G'_{1,i})
\big]$,
for $h\in\{0,1\}$.
Denote by
IS($\beta,G$)  the following Ising
model
 on $\{0,1\}^{V(G)}$:
$$\mbbP(\sigma)= \frac{e^{-\beta J(\sigma;G)}}
{\sum\limits_{\gamma\in
\{0,1\}^{V(G)}}e^{-\beta J(\gamma;G)} }.$$
For $h,l\in\{0,1\}$,
$\sigma\in\{0,1\}^{V(G_{h,i})}$, let
$e_{il}^h(\sigma) =\big|e_{G_{h,i+1}}(k_i,\sigma^{-1}(l))
\big|$.
Recall from definition \ref{sbmdefG}
that $k_i$ is the node added into
$G_{h,i}$ at step $i$.
The key observation is:
\begin{align}\label{sbmeq7}
& Z(\beta,G_{h,i+1})-Z(\beta,G_{h,i})\
\\ \nonumber
=&
 \log\bigg(
\sum\limits_{\sigma\in \{0,1\}
^{V(G_{h,i})}}\dfrac
{
e^{-\beta J(\sigma; G_{h,i})}
\cdot
(e^{-\beta
e^h_{i0}(\sigma)
}+
e^{
-\beta e^h_{i1}(\sigma)
}
)
}
{
\sum\limits
_{\gamma\in \{0,1\}^{V(G_{h,i})}
}
e^{-\beta J(\gamma; G_{h,i})}
}
\bigg)\
\\ \nonumber
=& \log\big(
\
\oE_{\sigma\sim IS(\beta,G_{h,i})}
[
\ e^{-\beta\cdot e^h_{i0}}+
e^{-\beta \cdot e^h_{i1}}\
]
\ \big)
.
\end{align}

Here and below,
 $e^h_{il}$ is short for $e^h_{il}(\sigma)$.
Another key point is to take
advantage of the convexity of $\log$
as follows:
\begin{align}\label{sbmeqlog}
\oE_{X,Y}[\log f(X,Y)]
\leq \oE_X\big[\ \log \big(\
\oE_{Y|X}[ f(X,Y)]\ \big)\
\big]
\leq \log \oE_{X,Y}[ f(X,Y)].
\end{align}

Therefore using (\ref{sbmeq7})
(\ref{sbmeqlog}) we have:
\begin{align}\label{sbmeq8}
&\oE_{G_{h,i+1}|G_{h,i}}
\big[\ Z(\beta,G_{h,i+1})-Z(\beta,G_{h,i})\ \big]
\\ \nonumber
=&
\oE_{G_{h,i+1}|G_{h,i}}
\bigg[
\log\bigg(
\oE_{\sigma|G_{h,i}}\big[
\ e^{-\beta\cdot e^h_{i0}}+
e^{-\beta \cdot e^h_{i1}}\
\big]
\bigg)\
\bigg]
\\ \nonumber
\leq& \log \oE_{G_{h,i+1},\sigma|G_{h,i}}
\big[
\ e^{-\beta\cdot e^h_{i0}}+
e^{-\beta \cdot e^h_{i1}}\
\big].
\end{align}

\begin{align}\label{sbmeq10}
&\oE_{G_{h,i+1}|G_{h,i}}
\big[\ Z(\beta,G_{h,i+1})-Z(\beta,G_{h,i})\ \big]
\\ \nonumber
\geq&
\oE_{G_{h,i+1},\sigma|G_{h,i}}
\big[
\ \log
\big(
\ e^{-\beta\cdot e^h_{i0}}+
e^{-\beta \cdot e^h_{i1}}\
\big)
\ \big]
\\ \nonumber
\geq&
\oE_{G_{h,i+1},\sigma|G_{h,i}}
[\
-\beta \min\big\{\  e^h_{i0},
 e^h_{i1}\
\big\}
\ ]
.
\end{align}
Where $\sigma|G_{h,i}$ is the Ising model
$IS(\beta,G_{h,i})$.
Note that  in the calculation
of $\oE_{G_{h,i+1},\sigma|G_{h,i}}$,
$\sigma,G_{h,i+1}$ conditioned on $G_{h,i}$
are mutually independent.
It is not surprise that we need some properties
on the Ising model on $G_{h,i}$.
Let
\begin{align}\label{sbmdefxy}
&\overline{l} =
\operatorname*{arg\, max}\limits_{l\in\{0,1\}}
\frac{|\sigma^{-1}(l)\cap N_1|}{|N_1|};
\\ \nonumber
&x(\sigma) =
\frac{|\sigma^{-1}(\overline{l})
\cap N_{0}|}{|N_{0}|},\ \
y(\sigma) = \frac{|\sigma^{-1}(\overline{l})
\cap N_{1}|}{|N_{1}|}.
\end{align}

We prove that for $G_{h,i},G'_{1,j}$,
with large probability (with respect to
$G_{h,i},G'_{1,j}$):
$y(\sigma)$ ( $x(\sigma)$ ) is
 closed to
$y^*$ ($ x^*$)
with large probability
(with respect to $\sigma$).
By condition \ref{sbmcond1}
item 2-(d), $\varepsilon_0$ is well defined
and therefore $\epsilon_0,\epsilon_1$
are well defined. So the following
lemmas make sense.
\begin{lemma}\label{sbmlem2}
Assume condition \ref{sbmcond1} holds.
If $\lambda<0$, then
for any $i\leq [\frac{\delta n}{(1+r)^2}]$,
$j\leq 2[\frac{d|\lambda| \delta n}{2(1+r)^2}]$,
$h\in\{0,1\}$,
with probability larger than $1-4\cdot2^{-n}$:
\begin{align}\nonumber
&\mbbP_{\sigma\sim IS(\beta,G_{h,i})}
\bigg(\ \epsilon_0
< x(\sigma)
 \bigg),\ \
 \mbbP_{\sigma\sim IS(\beta,G'_{1,j})}
\bigg(\ \epsilon_0
< x(\sigma)
 \bigg)\leq 2^{-n},
\\ \nonumber
&\mbbP_{\sigma\sim IS(\beta,G_{h,i})}
\bigg(\
\big|
y(\sigma)-y^*
\big|>\epsilon_1
 \ \bigg),\ \
  \mbbP_{\sigma\sim IS(\beta,G'_{1,j})}
\bigg(\ \big|
y(\sigma)-y^*
\big|>\epsilon_1
 \bigg)
 \leq 2^{-n}.
 \end{align}

\end{lemma}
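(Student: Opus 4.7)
The plan is to show that the Ising measure $IS(\beta,G)$ concentrates on configurations nearly minimizing $J(\sigma;G)$, and that for $\lambda<0$ these correspond exactly to profiles with $(x(\sigma),y(\sigma))\approx(x^\ast,y^\ast)$. As a first step, for a fixed $\sigma$ with profile $(x,y)$, expanding the three types of monochromatic edge counts in the bipartite SBM defining $\tilde G$ (using $|N_0|\approx n/(1+r)$, $|N_1|\approx rn/(1+r)$) gives
\[
\tfrac{1}{n}\,\oE_G J(\sigma;G)=\tfrac{d}{(1+r)^2}\big[r\lambda(x-y)^2+(x+ry-\tfrac{1+r}{2})^2\big]+\tfrac{d}{4}+\text{lower order},
\]
whose minimum over $(x,y)$ is $dC(r,\lambda)/(1+r)^2$ and, by Proposition \ref{sbmprop2} item~4, is attained at $(x^\ast,y^\ast)$. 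For the modified graphs $G_{h,i}$ (extra random nodes) and $G'_{1,j}$ (edge swaps inside $N_0,N_1$) the same expression holds up to an $O(\delta d)$ perturbation uniform in $\sigma$; this perturbation is absorbed into the slack engineered in Condition \ref{sbmcond1}.

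Second, $J(\sigma;G)$ is a sum of independent edge-Bernoullis, so the multiplicative Chernoff bound yields $\mbbP(|J(\sigma;G)-\oE J(\sigma;G)|>\varepsilon_0\,\oE J(\sigma;G))\leq 2\exp(-\oE J(\sigma;G)\cdot g(\varepsilon_0))$ with $g$ as in \eqref{sbmdefg}. The choice $\varepsilon_0=g^{-1}(\tfrac{4\log 2(1+r)^2}{dC(r,\lambda)})$ (well-defined by item 2-(d) of Condition \ref{sbmcond1}) together with the elementary lower bound $\oE J(\sigma;G)\geq dn\,C(r,\lambda)/(1+r)^2$ makes the Chernoff exponent at least $4n\log 2$. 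Union-bounding over the $2^n$ profiles $\sigma$ and the four graph/event combinations of the lemma therefore gives, with probability at least $1-4\cdot 2^{-n}$ over $G\in\{G_{h,i},G'_{1,j}\}$, the uniform concentration
\[
\big|J(\sigma;G)-\oE_G J(\sigma;G)\big|\leq \varepsilon_0\,\oE_G J(\sigma;G)\leq \tfrac{1}{2}\varepsilon_0\,dn\qquad\text{for every }\sigma.
\]

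Third, on this high-probability event fix a reference configuration $\sigma^\ast$ realizing $(x(\sigma^\ast),y(\sigma^\ast))=(x^\ast,y^\ast)$ up to integer rounding. Any $\sigma$ in the bad set $\{x(\sigma)>\epsilon_0\}$ or $\{|y(\sigma)-y^\ast|>\epsilon_1\}$ automatically satisfies $y(\sigma)\geq 1/2$ by the definition of $\overline{l}$, so Proposition \ref{sbmprop2} item~1 gives the expected-energy gap $\beta(\oE J(\sigma;G)-\oE J(\sigma^\ast;G))\geq 2n\log 2+2n\beta\varepsilon_0 d$. Combining with the two concentration inequalities from Step 2, the actual gap satisfies $\beta(J(\sigma;G)-J(\sigma^\ast;G))\geq 2n\log 2$, and hence
\[
\mbbP_{\sigma\sim IS(\beta,G)}(\text{bad})\leq \frac{\sum_{\sigma\in\text{bad}}e^{-\beta J(\sigma;G)}}{e^{-\beta J(\sigma^\ast;G)}}\leq 2^n\,e^{-2n\log 2}= 2^{-n},
\]
yielding all four inequalities of the lemma.

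The principal obstacle is ensuring that the Chernoff envelope $\varepsilon_0\oE J$ is strictly dominated by the slack $2\beta\varepsilon_0 d$ per node provided by Proposition \ref{sbmprop2} item~1; this balancing act is precisely the purpose of the chain of sub-conditions in Condition \ref{sbmcond1}, and it is also what forces $\beta d\to\infty$. A secondary technical point is treating $G'_{1,j}$ uniformly in $j$: since the swapped edges are $\tilde G$-measurable functions depending only on already-exposed randomness, they shift $\oE J$ uniformly in $\sigma$ by $O(\delta dn)$, small enough to be absorbed into the slack and thus to pass cleanly through both the Chernoff step and the gap computation.
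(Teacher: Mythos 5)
Your proposal follows essentially the same route as the paper's own proof: the Chernoff/large-deviation bound expressed through $g$, a union bound over the $2^n$ configurations, the quadratic-form evaluation of $\frac{1}{n}\oE J(\sigma;G)$ in $(x(\sigma),y(\sigma))$ with an $O(\delta d)$ perturbation, and finally the ratio bound against a near-minimizing reference configuration combined with the energy gap from Proposition \ref{sbmprop2} item 1 (including the observation that $y(\sigma)\geq\frac12$ by the definition of $\overline{l}$). The only differences are bookkeeping constants (the paper loses a factor of $2$ in the lower bound on $\inf_\sigma\oE J$ to absorb the $\delta d$ error, so its Chernoff exponent is $2n\log 2$ rather than $4n\log 2$), which do not affect the conclusion.
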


Combine (\ref{sbmeq8})(\ref{sbmeq10})
with lemma \ref{sbmlem2} and after some
tedious calculation, we are able to
evaluate
$\oE\big[
Z(\beta,G_{h,i+1})-Z(\beta,G_{h,i})
\big],
\oE\big[
Z(\beta,G'_{h,i+1})-Z(\beta,G'_{h,i})
\big]$.
\begin{lemma}\label{sbmmainlem1}
\begin{align}\label{sbmmaineq1}
&\limsup\limits_{n\rightarrow\infty}
\sup\limits_{i\leq
[\frac{\delta n}{(1+r)^2}]
}
\oE
\big[\ Z(\beta,G_{0,i+1})-Z(\beta,G_{0,i})\ \big]
\leq
\log 2+(e^{-\beta}-1)d\cdot
[\ -2\max\{\epsilon_0,\epsilon_1\}+
\frac{1+r-y^*r+y^*r\lambda}{1+r}
\ ],
\\ \nonumber
&\limsup\limits_{n\rightarrow\infty}
\sup\limits_{i\leq
[\frac{\delta n}{(1+r)^2}]
}
\oE
\big[\ Z(\beta,G_{1,i+1})-Z(\beta,G_{1,i})\ \big]
\leq
\log 2+
(e^{-\beta}-1)d\cdot
[\ -2\max\{\epsilon_0,\epsilon_1\}+
\frac{y^*(r+\lambda)}{1+r}
\ ].
\end{align}
\begin{align}\label{sbmmaineq2}
&\liminf\limits_{n\rightarrow\infty}
\sup\limits_{i\leq
[\frac{\delta n}{(1+r)^2}]
}
\oE
\big[\ Z(\beta,G_{0,i+1})-Z(\beta,G_{0,i})\ \big]
\geq
-\beta d\cdot (2\max\{\epsilon_0,\epsilon_1\}+
\frac{1+r-y^*r+y^*r\lambda}{1+r}),
\\ \nonumber
&\liminf\limits_{n\rightarrow\infty}
\sup\limits_{i\leq
[\frac{\delta n}{(1+r)^2}]
}
\oE
\big[\ Z(\beta,G_{1,i+1})-Z(\beta,G_{1,i})\ \big]
\geq
-\beta d\cdot (2\max\{\epsilon_0,\epsilon_1\}+
\frac{y^*(r+\lambda)}{1+r}d
).
\end{align}
\end{lemma}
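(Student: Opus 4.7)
The plan is to begin from (\ref{sbmeq7}) with its two-sided sandwich (\ref{sbmeq8})--(\ref{sbmeq10}) and to evaluate the inner expectation by conditioning on $\sigma \sim \mathrm{IS}(\beta,G_{h,i})$. Since the indicators $X_{h,iu}$ are mutually independent for $u \in V(G_{h,i})$, for any fixed $\sigma$ one has
\begin{align}\nonumber
\oE_{G_{h,i+1}\mid G_{h,i},\sigma}\big[e^{-\beta e_{il}^h(\sigma)}\big]
= \prod_{u \in \sigma^{-1}(l)} \big(1 - p_{h,u}(1-e^{-\beta})\big)
= \exp\!\big[(e^{-\beta}-1)\,S_l^h(\sigma) + O(1/n)\big],
\end{align}
where $p_{h,u}$ is the connection probability from $k_i$ to $u$ specified in definition \ref{sbmdefG} (depending only on $h$ and on whether $u \in N_0$ or $u \in N_1$), and $S_l^h(\sigma) := \sum_{u \in \sigma^{-1}(l)} p_{h,u}$ is a linear function of $(x(\sigma), y(\sigma))$ once the orientation $\bar{l}$ is fixed. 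Using $|N_0| = n/(1+r)+O(1)$, $|N_1| = rn/(1+r)+O(1)$ and the identity $\frac{1+r\lambda}{1+r} + \frac{r(1-\lambda)}{1+r} = 1$, on the ``good'' event $\mathcal{G} := \{x(\sigma) \le \epsilon_0,\ |y(\sigma)-y^*| \le \epsilon_1\}$ each of $S_0^h(\sigma), S_1^h(\sigma)$ agrees with its value $\mu_l^h := S_l^h|_{(x,y)=(x^*,y^*)}$ up to an additive error of $O(d\max\{\epsilon_0,\epsilon_1\})$; proposition \ref{sbmprop2} conclusion 2 then identifies $\min\{\mu_0^h,\mu_1^h\}$ as $\frac{1+r-y^*r+y^*r\lambda}{1+r}d$ when $h=0$ and as $\frac{y^*(r+\lambda)}{1+r}d$ when $h=1$.

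Lemma \ref{sbmlem2} guarantees that, with probability $\ge 1-4\cdot 2^{-n}$ over $G_{h,i}$, the Ising measure of $\mathcal{G}$ is $\ge 1-2\cdot 2^{-n}$. For the upper bound (\ref{sbmmaineq1}), I apply (\ref{sbmeq8}) and the elementary inequality $\log(A+B) \le \log 2 + \max\{\log A, \log B\}$ to the $\sigma$-expectation; since $e^{-\beta}-1 < 0$, the maximum picks out the smaller of $S_0^h(\sigma), S_1^h(\sigma)$, which on $\mathcal{G}$ is within $O(d\max\{\epsilon_0,\epsilon_1\})$ of $\min\{\mu_0^h,\mu_1^h\}$. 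The mass of the bad $\sigma$ inside the expectation is $O(2^{-n})$ (the integrand is uniformly bounded by $2$), contributing $o(1)$ after taking the outer $\log$. For the lower bound (\ref{sbmmaineq2}), I apply (\ref{sbmeq10}) together with the pointwise inequality $\oE[\min(X,Y)\mid\sigma] \le \min\{\oE[X\mid\sigma],\oE[Y\mid\sigma]\} = \min\{S_0^h(\sigma), S_1^h(\sigma)\}$, whose good-$\sigma$ expectation is again within $O(d\max\{\epsilon_0,\epsilon_1\})$ of $\min\{\mu_0^h,\mu_1^h\}$; the bad-$\sigma$ contribution is $O(\beta d\cdot 2^{-n}) = o(1)$ by the uniform identity $S_0^h+S_1^h = d + O(1)$.

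The main obstacle will be the careful bookkeeping of the $\max\{\epsilon_0,\epsilon_1\}$ correction (the constant $2$ in the lemma statement is a safe, slightly loose choice arising from bounding $|\min(S_0^h,S_1^h) - \min(\mu_0^h,\mu_1^h)|$ by $|S_0^h-\mu_0^h|+|S_1^h-\mu_1^h|$ rather than by their maximum). Two subtleties deserve attention. First, the orientation $\bar{l}(\sigma)$ swaps the two labels, so $\mathcal{G}$ is really a disjoint union over the two orientations; however, the unordered pair $\{S_0^h(\sigma), S_1^h(\sigma)\}$, and hence $\min\{S_0^h, S_1^h\}$, is invariant under the flip, so proposition \ref{sbmprop2} conclusion 2 applies uniformly. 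Second, one uses $\beta \le 1$ from condition \ref{sbmcond1} to convert $(1-e^{-\beta})$-type factors into $\beta$-multiples and to absorb the $O(1/n)$ approximation error in the products. Finally, since every estimate above is uniform in $i \le [\delta n/(1+r)^2]$, the $\sup_i$ and $\limsup_n$ commute in the desired way.
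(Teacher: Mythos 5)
Your proposal is correct and follows essentially the same route as the paper's proof: it starts from (\ref{sbmeq8})--(\ref{sbmeq10}), exploits the conditional independence of the $X_{h,iu}$ to reduce $\oE[e^{-\beta e^h_{il}}\mid\sigma]$ to $\exp\{(e^{-\beta}-1)S^h_l(\sigma)\}$ up to negligible error, restricts to the good event supplied by lemma \ref{sbmlem2}, and identifies the dominant exponent via proposition \ref{sbmprop2} conclusion 2, with the bound $e^A+e^B\le 2e^{\max\{A,B\}}$ playing the role of your $\log(A+B)\le\log 2+\max\{\log A,\log B\}$. The only cosmetic differences are that the paper absorbs the $O(\delta d)$ discretization error from (\ref{sbmeq24}) into the factor $2\max\{\epsilon_0,\epsilon_1\}$ rather than attributing the constant $2$ to a triangle-inequality bound, and it bounds the bad-event integrand by $2e^{2d(1-\lambda)}$ where your uniform bound of $2$ suffices.
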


\begin{lemma}\label{sbmlem3}
\begin{align}\label{sbmeqmain3}
&\limsup\limits_{n\rightarrow\infty}
\ \sup\limits_{
[\frac{d|\lambda|\delta n }{2(1+r)^2}]+1\leq
j\leq
2[\frac{d|\lambda|\delta n }{2(1+r)^2}]
}\
\oE\big[\ Z(\beta,G'_{1,j+1})-
Z(\beta,G'_{1,j})\ \big]
\\ \nonumber
\leq&
(e^{-\beta}-1)
\big(\
1- 2y^*(1-y^*)
-
2\epsilon_1\ \big)
.\end{align}
\end{lemma}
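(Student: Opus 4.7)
The plan is to treat each edge addition as a local perturbation of the log partition function and then average over the uniformly random disconnected pair. Fix $j$ in the stated range and let $\{u_j, v_j\}$ denote the pair inserted at step $j+1$. The starting identity is the single-edge analogue of (\ref{sbmeq7}):
\[
Z(\beta, G'_{1,j+1}) - Z(\beta, G'_{1,j}) \;=\; \log \oE_{\sigma \sim IS(\beta, G'_{1,j})} \big[\, e^{-\beta I(\sigma(u_j) = \sigma(v_j))} \,\big],
\]
since adding the edge $\{u_j, v_j\}$ alters $J(\sigma; G)$ by exactly $I(\sigma(u_j) = \sigma(v_j))$. I take $\oE_{(u_j, v_j) \mid G'_{1,j}}$, with $(u_j, v_j)$ uniform over disconnected pairs in $N_1$, and pull the logarithm outside using the upper half of the convexity bound (\ref{sbmeqlog}).

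The next step is a combinatorial count. For fixed $\sigma$ and $G'_{1,j}$, let $p(\sigma) = \mbbP(\sigma(u_j) = \sigma(v_j) \mid \sigma, G'_{1,j})$. Writing $E_{\mathrm{same}}$ for the same-label edges of $G'_{1,j}$ lying in $N_1$,
\[
p(\sigma) \;=\; \frac{\binom{y(\sigma)|N_1|}{2} + \binom{(1-y(\sigma))|N_1|}{2} - E_{\mathrm{same}}}{\binom{|N_1|}{2} - |E(G'_{1,j}) \cap \binom{N_1}{2}|}.
\]
The $N_1$-edges of $\tilde G$ number $\Theta(n)$ in expectation, and at most $2[d|\lambda|\delta n/(2(1+r)^2)] = O(n)$ edges are deleted or inserted in the whole schedule, so both $E_{\mathrm{same}}$ and $|E(G'_{1,j}) \cap \binom{N_1}{2}|$ stay $O(n)$, which is negligible relative to $\binom{|N_1|}{2} = \Theta(n^2)$. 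Hence $p(\sigma) = 1 - 2y(\sigma)(1 - y(\sigma)) + O(1/n)$.

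Now apply Lemma \ref{sbmlem2}: with probability $\geq 1 - 4 \cdot 2^{-n}$ over $G'_{1,j}$, the Ising law $IS(\beta, G'_{1,j})$ puts all but $2^{-n}$ mass on configurations with $|y(\sigma) - y^*| \leq \epsilon_1$. Since $y \mapsto 2y(1-y)$ is $2$-Lipschitz, this yields $\oE_\sigma[2y(\sigma)(1-y(\sigma))] \leq 2y^*(1-y^*) + 2\epsilon_1 + o(1)$, hence $\oE_\sigma[p(\sigma)] \geq 1 - 2y^*(1-y^*) - 2\epsilon_1 - o(1)$. Combining with
\[
\oE_{(u_j, v_j), \sigma \mid G'_{1,j}} \big[\, e^{-\beta I(\sigma(u_j)=\sigma(v_j))} \,\big] \;=\; 1 + (e^{-\beta} - 1)\, \oE_\sigma[p(\sigma)]
\]
and using $(e^{-\beta} - 1) < 0$ together with $\log(1+x) \leq x$ gives, on the good event,
\[
\oE_{G'_{1,j+1} \mid G'_{1,j}} \big[\, Z(\beta, G'_{1,j+1}) - Z(\beta, G'_{1,j}) \,\big] \;\leq\; (e^{-\beta} - 1)\big( 1 - 2y^*(1-y^*) - 2\epsilon_1 \big) + o(1).
\]
Since any single edge addition changes $Z$ by at most $\beta$, the $4 \cdot 2^{-n}$-exceptional event contributes only $o(1)$ after $\oE_{G'_{1,j}}$, and (\ref{sbmeqmain3}) follows upon taking $\limsup_n \sup_j$.

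The main obstacle is the bookkeeping in the combinatorial step: one must verify that the edge counts in $N_1$ remain $O(n)$ throughout the entire deletion/addition schedule, so that replacing ``uniform over disconnected pairs'' by ``uniform over all pairs in $\binom{N_1}{2}$'' costs only $O(1/n)$. A second delicate point is sign tracking: since $(e^{-\beta} - 1) < 0$, an \emph{upper} bound on the log partition increment is produced from a \emph{lower} bound on $\oE_\sigma[p(\sigma)]$, which is why the Lipschitz estimate on $2y(1-y)$ must be used to lower-bound $p(\sigma)$ rather than to upper-bound it.
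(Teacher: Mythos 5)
Your proposal is correct and follows essentially the same route as the paper: the single-edge identity for the increment of $Z$, the convexity bound (\ref{sbmeqlog}) to pull the logarithm outside, the conditional independence of $\{u_j,v_j\}$ and $\sigma$ given $G'_{1,j}$, Lemma \ref{sbmlem2} with the Lipschitz bound on $2y(1-y)$ to control $\oE_\sigma[y(\sigma)^2+(1-y(\sigma))^2]$, and finally $\log(1+x)\leq x$. The only difference is that you explicitly justify replacing the uniform disconnected pair by a uniform pair in $N_1$ (the $O(1/n)$ edge-count correction) and track the exceptional event, points the paper passes over silently; this is a refinement of the same argument rather than a different one.
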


Now we can prove theorem \ref{sbmth1}
conclusion 2.
By lemma \ref{sbmmainlem1},
\begin{align}\label{sbmeq20}
&\oE\big[
Z(\beta,G'_{0})
-Z(\beta,\tilde{G})
\big]
\geq
-\beta d\cdot
\big(
2\max\{\epsilon_0,\epsilon_1\}+
\frac{1+r-y^*r+y^*r\lambda}{1+r}\
\big)\cdot
\frac{\delta n}{(1+r)^2},
\\ \nonumber
&\oE\big[
Z(\beta,G'_{1,0})
-Z(\beta,\tilde{G})
\big]
\leq
\big[ \log 2+
(e^{-\beta}-1)d\cdot
(\ -2\max\{\epsilon_0,\epsilon_1\}+
\frac{y^*(r+\lambda)}{1+r})\
\big]\cdot
\frac{\delta n}{(1+r)^2}.
\end{align}
It is  obvious that,
\begin{align}\label{sbmeq21}
\oE\big[
Z(\beta,G'_{1,[\frac{d|\lambda|\delta n }{2(1+r)^2}]})
-Z(\beta,G'_{1,0})
\big]
\leq \beta d|\lambda|
\frac{\delta n}{2(1+r)^2}.
\end{align}
And by lemma \ref{sbmlem3},
\begin{align}\label{sbmeq22}
\oE\big[
Z(\beta,G'_1)
-
Z(\beta,G'_{1,[\frac{d|\lambda|\delta n }{2(1+r)^2}]})
\big]\leq
(e^{-\beta}-1)
\big(\
1- 2y^*(1-y^*)
-
2\epsilon_1\ \big)
\cdot d|\lambda|
\frac{\delta n}{2(1+r)^2}.
\end{align}
In summary of
(\ref{sbmeq20})(\ref{sbmeq21})(\ref{sbmeq22}),
\begin{align}\label{sbmeq23}
&\oE\big[
Z(\beta,G'_0)
-
Z(\beta,G'_1)
\big]\\ \nonumber
\geq&
-\beta d\cdot
\big(
2\max\{\epsilon_0,\epsilon_1\}+
\frac{1+r-y^*r+y^*r\lambda}{1+r}\
\big)\cdot
\frac{\delta n}{(1+r)^2}
\\ \nonumber
&\hspace{1cm}
-
\big[ \log 2+
(e^{-\beta}-1)d\cdot
\big(\ -2\max\{\epsilon_0,\epsilon_1\}+
\frac{y^*(r+\lambda)}{1+r}
\big)\
\big]\cdot
\frac{\delta n}{(1+r)^2}\\ \nonumber
&\hspace{1cm}-
\beta d|\lambda|
\frac{\delta n}{2(1+r)^2}
\\ \nonumber
&\hspace{1cm}-
(e^{-\beta}-1)
\big(\
1- 2y^*(1-y^*)
-
2\epsilon_1\ \big)
\cdot d|\lambda|
\frac{\delta n}{2(1+r)^2}
\\ \nonumber
\geq &
\frac{\delta d n}{(1+r)^2}
\cdot
\bigg(
-\beta \cdot
\big(\
\frac{1+r-y^*r+y^*r\lambda}{1+r}
-\frac{y^*(r+\lambda)}{1+r}
+|\lambda|y^*(1-y^*)
\ \big)
\\ \nonumber
&\hspace{2cm}
-2(e^{-\beta}+\beta-1)
-\frac{\log 2}{d}
-4(1-e^{-\beta}+\beta)\cdot
\max\{\epsilon_0,\epsilon_1\}\
\bigg).
\end{align}

Intuitively,
the dominating term is
$-\beta \cdot
\big(\
\frac{1+r-y^*r+y^*r\lambda}{1+r}
-\frac{y^*(r+\lambda)}{1+r}
+2|\lambda|y^*(1-y^*)
\ \big)$, which
is of order $\beta$.
It is helpful to recall that
$\beta=o(1),
\max\{\epsilon_0,\epsilon_1\} =
o(1),\frac{1}{d} = o(1),
\beta d>>1 $.
The last three terms
are of order
$O(\beta^2 ), O(1/d),
o(\beta)$, and are thus
ignorable compared to $\beta$.

The dominating  term is,
\begin{align}\nonumber
&r\leq 1-2\lambda\ \ \Rightarrow
\ \
y^*=1\ \
\Rightarrow \hspace{1cm}
&&\ \ \frac{1+r-y^*r+y^*r\lambda}{1+r}
-\frac{y^*(r+\lambda)}{1+r}
+|\lambda|y^*(1-y^*)
\\ \nonumber
&\
&&=\frac{(r-1)(\lambda-1)}{1+r}<0;
\\ \nonumber
&r>1-2\lambda\ \ \Rightarrow\ \
y^*=\frac{r+1}{2(r+\lambda)}
\Rightarrow\hspace{1cm}
&&\ \ \frac{1+r-y^*r+y^*r\lambda}{1+r}
-\frac{y^*(r+\lambda)}{1+r}
+|\lambda|y^*(1-y^*)
\\ \nonumber
&\
&&=\frac{\lambda(1+r)^2}{4(r+\lambda)^2}<0
.\end{align}
Also note
by condition \ref{sbmcond1} item 1,
 $e^{-\beta}+\beta-1\leq \beta^2$
 and
$\beta^2\leq\beta$
.
 Thus continue (\ref{sbmeq23}),
\begin{align}\nonumber
&\oE\big[
Z(\beta,G'_0)
-
Z(\beta,G'_1)
\big]
\\ \nonumber
\geq&
\delta n\cdot \frac{\beta d}{(1+r)^2}\cdot
\bigg(\ \min\big\{\
\frac{(r-1)(1-\lambda)}{1+r}
,\frac{|\lambda|(1+r)^2}{4(r+\lambda)}\
\big\}
-\big(\ 2\beta+\frac{\log 2}{\beta d}
+ 12\max\{\epsilon_0,\epsilon_1\}\
\big)\ \
\bigg).
\end{align}
The conclusion 2 of theorem \ref{sbmth1}
thus follows.

\end{proof}

\subsection{Clustering when $\lambda<0$}
\label{sbmsubsec3}
Recall that $\pi_0=\frac{1}{1+r}$,
$\pi_1=\frac{r}{1+r}$ and $r\geq 1$.
We provide the following
random clustering algorithm:
given the observed graph $G_n\sim SBM(d,\lambda,r,n)$,
sample a $\sigma\sim
IS(\frac{1}{\sqrt{\hat{d}}},G_n)$;
let $l'= \operatorname*{arg\, max}
\limits_{l\in\{0,1\}}
|\sigma^{-1}(l)|$;
the estimator for the community label
is, $\tau(u)= I(u\in \sigma^{-1}(l')) $.
Let $M_0,M_1$ denote the two communities
of $G_n$ with $\oE[|M_0|] = \pi_0n$.
Recall from proposition \ref{sbmprop2}
conclusion 5 the
definition of $C_1(\lambda)$.
\begin{theorem}\label{sbmth4}
If $\lambda<0$,
$d\geq C_1(\lambda)$, then
the estimator $\tau(\cdot)$
is positively correlated to the
true labeling since
\begin{align}\nonumber
\mbbP\bigg(
\
\ \big|\
\frac{|\sigma^{-1}(l')\cap M_0|}
{|M_0|}
- \frac{x^*}{1+r}
\big|\leq \epsilon_0,
 \big|\
\frac{|\sigma^{-1}(l')\cap M_1|}
{|M_1|}
- \frac{y^* r}{1+r}
 \big|\leq \epsilon_1
\
\bigg)
=1-e^{-\Omega(n)}.
\end{align}

\end{theorem}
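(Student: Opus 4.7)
The plan is to derive Theorem \ref{sbmth4} as a direct corollary of Lemma \ref{sbmlem2} applied to the input graph $G_n \sim SBM(d,\lambda,r,n)$ itself, after verifying that $(d,\lambda,r,\beta)$ with $\beta = 1/\sqrt{\hat d}$ satisfies Condition \ref{sbmcond1}. By the $\sqrt{n}$-consistency of $\hat d$ (Proposition \ref{sbmprop4}) together with a Chernoff bound on $|E(G_n)|$, one has $\hat d/d \in [1/2,2]$ on an event of probability $1 - e^{-\Omega(n)}$, so $\beta \in [1/(2\sqrt d),\,2/\sqrt d]$; Proposition \ref{sbmprop2}(5)(a) then supplies Condition \ref{sbmcond1} uniformly over this range and in particular makes $\varepsilon_0, \epsilon_0, \epsilon_1$ well defined.

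Next I would condition on the true community labels of $G_n$, writing the resulting partition as $M_0 \cup M_1 = V$. By Chernoff, $||M_0| - n/(1+r)| = O(\sqrt n)$ with probability $1 - e^{-\Omega(n)}$. Conditioned on this partition, $G_n$ has exactly the law of the graph $\tilde G$ in Definition \ref{sbmdefG} with $(N_0, N_1)$ replaced by $(M_0, M_1)$ and $\delta = 0$. Inspecting the argument of Lemma \ref{sbmlem2} in Section \ref{sbmsec3}, one confirms it is robust to $O(\sqrt n)$ fluctuations of the community sizes, so it still yields that with probability $\geq 1 - 4\cdot 2^{-n}$ over $G_n$, a sample $\sigma \sim IS(\beta, G_n)$ satisfies $x(\sigma) \leq \epsilon_0$ and $|y(\sigma) - y^*| \leq \epsilon_1$ with probability $\geq 1 - 2\cdot 2^{-n}$, where $x,y$ are defined in \eqref{sbmdefxy} relative to $(M_0, M_1)$.

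Finally I would argue that the overall majority label $l'$ coincides with the local majority $\bar l$ on this event, so that $\sigma^{-1}(l')$ automatically has the claimed intersection sizes with $M_0$ and $M_1$. The total mass of vertices carrying label $\bar l$ is $x(\sigma)|M_0| + y(\sigma)|M_1| \approx \epsilon_0\cdot n/(1+r) + y^*\cdot rn/(1+r)$. Using Proposition \ref{sbmprop2}(3)--(4) together with $r \geq 1$ and $\lambda < 0$, either $y^* = 1$, so $r y^*/(1+r) = r/(1+r) \geq 1/2$, or $y^* = (r+1)/(2(r+\lambda))$, so $r y^*/(1+r) = r/(2(r+\lambda)) > 1/2$; in both cases the majority label $l'$ agrees with $\bar l$ once $\epsilon_0, \epsilon_1$ are small enough. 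Translating the bounds on $x(\sigma), y(\sigma)$ using $x^* = 0$ then delivers the concentration stated in the theorem, with total failure probability $e^{-\Omega(n)}$.

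The hard part will be verifying that the proof of Lemma \ref{sbmlem2} truly survives the replacement of the deterministic sizes $|N_0|, |N_1|$ in Definition \ref{sbmdefG} by the random (but sharply concentrated) sizes $|M_0|, |M_1|$, and that the dependence on $\beta$ is sufficiently uniform on the interval $[1/(2\sqrt d),\,2/\sqrt d]$ so that plugging in $\beta = 1/\sqrt{\hat d}$ inflicts no damage. Both should be routine because the $2^{-n}$ bounds in Lemma \ref{sbmlem2} dwarf any $n^{-1/2}$ perturbation, but both need to be spelled out for a complete proof.
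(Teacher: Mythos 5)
Your proposal is correct and follows essentially the same route as the paper: verify Condition \ref{sbmcond1} via Proposition \ref{sbmprop2}(5) and the consistency of $\hat d$, apply Lemma \ref{sbmlem2} to $G_n$ with communities $M_0,M_1$, and then identify the label concentrated near $(x^*,y^*)$ with the global majority label $l'$. The only cosmetic difference is that you derive the bound $\frac{(y^*-\epsilon_1)r}{1+r}>\frac12$ from the explicit formula for $y^*$ in Proposition \ref{sbmprop2}(3)--(4), whereas the paper reads it off directly from Condition \ref{sbmcond1} item 2-(b) together with the large-deviation bound on $|M_1|$.
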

\begin{proof}
By proposition \ref{sbmprop2}
conclusion 5
and consistency of $\hat{d}$,
 condition \ref{sbmcond1}
holds for
$\beta=\frac{1}{\sqrt{\hat{d}}},d,r,\lambda$
with large probability.
Therefore,
by lemma \ref{sbmlem2}
\begin{align}\nonumber
\mbbP\bigg(
\ (\exists l\in\{0,1\})
\ \big|\
\frac{|\sigma^{-1}(l)\cap M_0|}
{|M_0|}
- \frac{x^*}{1+r}
\big|\leq \epsilon_0,
 \big|\
\frac{|\sigma^{-1}(l)\cap M_1|}
{|M_1|}
- \frac{y^* r}{1+r}
 \big|\leq \epsilon_1
\
\bigg)
\geq 1-2^{-n+1}.
\end{align}
But for the $l\in\{0,1\}$ with
$\big|\
\frac{|\sigma^{-1}(l)\cap M_0|}
{|M_0|}
- x^*
\big|\leq \epsilon_0,
 \big|\
\frac{|\sigma^{-1}(l)\cap M_1|}
{|M_1|}
- y^*
 \big|\leq \epsilon_1$,
 by large deviation theorem for
$|M_1|$,
 we have that with probability
 $1-e^{-\Omega(n)}$,
 $\frac{|\sigma^{-1}(l)|}{n}
 >\frac{1}{2}$ since by condition
 \ref{sbmcond1} item 2-(b)
 $\frac{(y^*-\epsilon_1) r}{1+r}
>\frac{1}{2}$. i.e.,
with probability $1-e^{-\Omega(n)}$,
$l=l'$.
The proof is thus accomplished.

\end{proof}

In practice, a variety  of techniques
 are available to sample $\sigma\sim
IS(\beta,G)$, for example the MCMC sampling.

\section{Proof of lemmas}
\label{sbmsec3}

\subsection{Proof of proposition \ref{sbmprop2}}

Conclusions $2,3,4$ follow by
direct computation.
We only give a sketched proof of item $1$ and $5$.
Let $f(x,y) = [\lambda r(x-y)^2+(x+ry-\frac{1+r}{2})^2]
+\frac{(1+r)^2}{4}$. It suffices
to show that the condition
for $x,y$ implies
$f(x,y)-C(r,\lambda)\geq
(r+1)^2\big(\frac{2\log 2}{\beta d}+
 2\varepsilon_0\big)$.

 For the first half of conclusion $1$,
suppose the minimum of
$f(x,y)$ on $x\in [\epsilon_0,1],y\in[\frac{1}{2},1]$
is attained at $x',y'$,
then either $x'$ lies
on the boundary of $[\epsilon_0,1]$
or $y'$ lies on the boundary of
$[\frac{1}{2},1]$.
Therefore it is easy to check that the minimum
must be
attained at either
$(\epsilon_0, \tilde{y})$ or
$ (\frac{1}{2},\frac{1}{2})$
for some $\tilde{y}$.
Clearly, we only  have
to deal with the case $(\epsilon_0, \tilde{y})$
since $f(\frac{1}{2},\frac{1}{2})
\geq C(r,\lambda)+
(r+1)^2\big(\frac{2\log 2}{\beta d}+
 2\varepsilon_0\big)$
  by condition \ref{sbmcond1}
 item 2-(a).
If $\tilde{y}=y^*$ then the conclusion
follows by definition of $\epsilon_0$
($=2\dfrac{(r+1)^2\big(\frac{2\log 2}{\beta d}+
 2\varepsilon_0\big)}
{\frac{\partial}{\partial x}f(0,y^*)}$)
and the fact
$\frac{\partial }{\partial x}f(0,y^*)>0$
,$\frac{\partial^2}{\partial x^2}
f(x,y)=2(1+r\lambda)>0$.
If $\tilde{y} = \frac{1}{2}$ then
$f(\epsilon_0,\tilde{y})\geq f(\frac{1}{2},
\frac{1}{2})\geq
C(r,\lambda)+(r+1)^2\big(\frac{2\log 2}{\beta d}+
 2\varepsilon_0\big)$.
If $\tilde{y}\ne y^*\wedge \tilde{y}>\frac{1}{2}$
then
it must be the case $\frac{\partial}{\partial y}
f(\epsilon_0,\tilde{y})=0\wedge
\frac{\partial}{\partial y}
f(\epsilon_0,y^*)>0$
since $\frac{\partial}{\partial y}
f(0,y^*)\leq 0$ and
$\frac{\partial^2 f(x,y)}{
\partial x\partial y}>0$.
This implies
 $\tilde{y}\leq y^*$ since
 $\frac{\partial^2}{\partial y^2}f(x,y)>0$.
Moreover,
$$\frac{\partial}{\partial y}
f(\epsilon_0,\tilde{y})
-(\tilde{y}-y^*)\frac{\partial^2}{\partial y^2}
f(x,y)-
\epsilon_0\frac{\partial^2}{\partial y\partial x}
f(x,y)
=\frac{\partial}{\partial y}
f(0,y^*)\leq 0.$$
So
$y^*-\tilde{y}\leq
\epsilon_0\frac{r(1-\lambda)}{r^2+r\lambda}
$. Also note that
for any $x\in [0, \epsilon_0],
y\in[\tilde{y},y^*]$,
$\frac{\partial }{\partial y}
f(x,y)\leq \epsilon_0\cdot
\frac{\partial^2}{\partial x\partial y}
f(x,y)=\epsilon_0\cdot 2r(1-\lambda)$.
Therefore,
$f(\epsilon_0,y^*)-
f(\epsilon_0,\tilde{y})
\leq (y^*-\tilde{y})
\frac{\partial}{\partial y}
f(\epsilon_0,y^*)\leq
\epsilon_0^2\frac{2r^2(1-\lambda)^2}{r^2+r\lambda}
\leq
(r+1)^2\big(\frac{2\log 2}{\beta d}+
 2\varepsilon_0\big)
$
(the last inequality
follows from
 condition \ref{sbmcond1} item 2-(c)).
Therefore by definition of $\epsilon_0$,
$$
f(\epsilon_0,\tilde{y})-C(r,\lambda)
\geq \epsilon_0\frac{\partial}{\partial x}
f(0,y^*)-
(f(\epsilon_0,y^*)- f(\epsilon_0,\tilde{y}))
\geq
(r+1)^2\big(\frac{2\log 2}{\beta d}+
 2\varepsilon_0\big).
$$

 For the second half of the conclusion
 $1$, note that
 by condition \ref{sbmcond1}
 item 2-(b)
 (which implies
 $2r(1-\lambda)(y^*-\epsilon_1)-(r+1)>0$)
  and the fact
 $\frac{\partial^2}{\partial x\partial y}
 f(x,y)=2r(1-\lambda)>0$, we have
 $
 0<\frac{\partial}{\partial x}
 f(0,y^*-\epsilon_1)<
 \frac{\partial}{\partial x}
 f(0,y^*+\epsilon_1)
 $.
  So by the fact
  $\frac{\partial^2 f(x,y)}{\partial^2 y},
  \frac{\partial^2 f(x,y)}{\partial x\partial y}>0
  $ the minimum
of $f(x,y)$ on
$x\in[0,1],|y-y^*|\geq \epsilon_1,y\in[\frac{1}{2},1]$
is attained at either of following points:
$(0,y^*-\epsilon_1)$,
 $(\frac{1}{2},\frac{1}{2})$,
 and $
(0,y^*+\epsilon_1)$
(if $y^*+\epsilon_1\leq 1$ of course).
 It is clear that by definition of $\epsilon_1$,
 $$\frac{1}{2}\epsilon_1^2\cdot
 \frac{\partial^2}{\partial y^2}
 f(x,y), \epsilon_1\big|
 \frac{\partial}{\partial y}
 f(0,y^*)\big|> (r+1)^2\big(\frac{2\log 2}{\beta d}+
 2\varepsilon_0\big).$$
 Thus,
 $f(0,y^*-\epsilon_1)\geq
 C(r,\lambda)+ (r+1)^2\big(\frac{2\log 2}{\beta d}+
 2\varepsilon_0\big)$
 and $f(0,y^*+\epsilon_1)
\geq C(r,\lambda)+ (r+1)^2\big(\frac{2\log 2}{\beta d}+
 2\varepsilon_0\big)$ if
 $y^*+\epsilon_1\leq 1$.
 By condition \ref{sbmcond1} item 2-(a),
 $f(\frac{1}{2},\frac{1}{2})
 \geq C(r,\lambda)+(r+1)^2\big(\frac{2\log 2}{\beta d}+
 2\varepsilon_0\big)$.
 Thus the second half
of  conclusion 1 follows.

Now we prove conclusion 5.
Note that, for any $\lambda<0$,
the following quantities
 are bounded away from
$0$ on $r\in [1,\infty)$:
$\min\{\frac{r+1}{2(r+\lambda)},1\}
-\frac{r+1}{2r},
\frac{(1+r)^2}{4}-C(r,\lambda),
r^2+r\lambda $,
$\min\{
 r-2r\lambda-1, \frac{-\lambda (1+r)^2}{r+\lambda}
 \}$,
 $C(r,\lambda)$.
 Therefore,
 if
 $\frac{1}{2\sqrt{d}}\leq \beta \leq \frac{2}{\sqrt{d}}$,
 then
  there exists
 a constant depending
 on $\lambda$, namely $C_1(\lambda)>0$, such that
 for any $r\geq 1$,
 any $d\geq C_1(\lambda)$,
 $d,r,\lambda$ satisfy
 condition \ref{sbmcond1}.

  If  $\frac{1}{2\sqrt{d}}\leq \beta \leq \frac{2}{\sqrt{d}}$,
 then  for large $d$, we have,
 uniformly in $r$,
 $\epsilon_0=O(\frac{1}{\sqrt{d}})$,
$ \varepsilon_0=O(\frac{1}{\sqrt{d}})$,
 $\epsilon_1= O(\frac{1}{d^{\frac{1}{4}}})$.
  Note that when $r$ is close to $1$,
 it is possible that
  $C(d,r,\lambda,\frac{1}{\sqrt{d}})>0$
  since $\min\big\{\
\frac{(r-1)(1-\lambda)}{1+r}
,\frac{|\lambda|(1+r)^2}{4(r+\lambda)}\
\big\} = O(r-1)$.
But obviously,
there exists
 a constant $C_2(\lambda)$
 such that for any $d\geq C_1(\lambda)$,
 any $r-1\geq \frac{C_2(
 \lambda)}{d^{\frac{1}{4}}}$ :
 \begin{enumerate}
 \item
$C(d,\lambda,r,\frac{1}{\sqrt{d}})<0$;
 \item
 $
 \forall 0\leq r'<r\ (\int_{r'}^r
 C(d,\lambda,t,\frac{1}{\sqrt{d}})dt
 <0\ ).
 $
 \end{enumerate}
Thus the conclusion follows.

\subsection{Proof of proposition \ref{sbmprop5}}

Let $X_{ij}
(Y_{ij}),1\leq i<j\leq n$ denote
a set of random variables
whose joint distribution is
the law of
$I(\{i,j\}\in E(G'_0)),1\leq i<j\leq n$
($I(\{i,j\}\in E(G'_1)),1\leq i<j\leq n$
).

To prove the conclusion for
$G'_{0}$,
let
$G\sim SBM(d,\lambda,r,n)$;
let
$Z_{ij} = I(\{i,j\}\in E(G))$
and denote by $M_0,M_1$ the two random
communities of $G$. Without loss
of generality suppose
$$
M_1 = \{\frac{n}{1+r}-[\frac{\delta n}{(1+r)^2}]+1,
\frac{n}{1+r}-[\frac{\delta n}{(1+r)^2}]+2,
\cdots,\frac{n}{1+r}-[\frac{\delta n}{(1+r)^2}]+|M_1|\}.
$$
Note that $X_{ij},i,j\leq n$
are mutually independent.
It is easy
to see that
we can couple
$X_{ij},1\leq i<j\leq  n$
with $Z_{ij},1\leq i<j\leq  n$
in the following way,
\begin{itemize}
\item $(X_{ij},Z_{ij}),i,j\leq n$
are mutually independent;
\item if $\oE[X_{ij}]\leq \oE[Z_{ij}]$,
then $\mbbP(Z_{ij}=1|X_{ij}=1)=1$,
$\mbbP(Z_{ij}=1|X_{ij}=0)=
\frac{\oE[Z_{ij}-X_{ij}]}{1-\oE[X_{ij}]}$;

\item if $\oE[X_{ij}]> \oE[Z_{ij}]$,
then $\mbbP(X_{ij}=1|Z_{ij}=1)=1$,
$\mbbP(Z_{ij}=1|X_{ij}=0)=
\frac{\oE[X_{ij}-Z_{ij}]}{1-\oE[Z_{ij}]}$.
\end{itemize}

Let $N_\Delta = \{n-[\frac{\delta n}{(1+r)^2}]+1,
\cdots,n\}$.
Then we have,
$$\oE\big[
\ \sum\limits_{1\leq i<j\leq n}
\big|X_{ij}-Z_{ij} \big|
\ \big]
\leq
2
d\cdot\bigg|\ |M_1|- |N_1|\ \bigg|
+O(\delta^2 n).
$$

Thus,
\begin{align}\nonumber
&\oE_{M_0,M_1}
\bigg[
\ \oE\bigg[\ \big|Z(\beta,G'_0)
-
Z(\beta,G)
\big|\ \bigg| M_0,M_1 \bigg]\
\bigg]\\ \nonumber
\leq&
|\beta|\cdot\oE_{M_0,M_1}
\bigg[
 \ \oE\big[
\ \sum\limits_{1\leq i<j\leq n}
\big|X_{ij}-Z_{ij} \big|
\ \big] \
\bigg]
\\ \nonumber
\leq&O(\delta^2 n)+
2d\cdot\oE_{M_0,M_1}
\bigg[\
\bigg|\ |M_1|- |N_1|\ \bigg|
\ \bigg]
\\ \nonumber
=&O(\delta^2 n+\sqrt{n}).
\end{align}
The conclusion for $G'_0$ is done.

The proof for $G'_1$ is a little
harder because
$I(\{i,j\}\in E(G'_1))$
are not mutually independent.
Let $G\sim SBM(d,\lambda,r+\lambda,n)$
and $M_0 ,
M_1$ be the two communities
as above.
The coupling of
$Z_{ij},Y_{ij}$ when
$(i,j)\in (N_0\times N_1)\cup
(N_1\times N_0)$
or
 $i\in N_\Delta$ or $j\in N_\Delta$
are the same as previous.
In order to couple the
rest of $Z_{ij},Y_{ij}$,
consider an auxiliary graph
$G''_1$ which is obtained
from $G_{1,[\frac{\delta n}{(1+r)^2}]}$
in the following way:
delete each edge
in $e_{G'_{1,[\frac{\delta n}{(1+r)^2}]}}
(N_0,N_0)$  independently
with probability
$\frac{\delta|\lambda|}{1+r\lambda}$;
add each disconnected pairs
in $N_1$ independently
to $E(G'_{1,[\frac{\delta n}{(1+r)^2}]})$
with probability
$\frac{\delta d|\lambda|}{r^2 n}$.
Let $Y''_{ij} = I(\{i,j\}\in E(G''_1))$.
Note that $Y''_{ij}$ are
mutually independent.
So clearly we can couple
$Y''_{ij},
1\leq i<j\leq n$ with $Z_{ij}
, 1\leq i<j\leq n$ such that
$$
\oE\bigg[
\sum\limits_{1\leq i<j\leq n}
\big|
Y''_{ij}-
Z_{ij}
\big|\ \bigg|
M_0,M_1
\bigg]
\leq
2d\cdot\bigg|\ |M_1|- |N_1|-|N_\Delta|\ \bigg|
+O(\delta^2 n).
$$
Therefore
$$
\oE\big[\
\big|Z(\beta,G_1'')-
Z(\beta,G)\big|\
\big]=O(\sqrt{n}+\delta^2 n).
$$
Now it remains to
couple $Y''_{ij},
Y_{ij}$ for
$(i,j)\in (N_0\times N_0)
\cup (N_1\times N_1)$
. We demonstrate the coupling
for $Y''_{ij},Y_{ij}$
with $(i,j)\in N_0\times N_0$
as follows.
\begin{itemize}
\item To generate $Y_{ij}''$, let $Y''\sim Bin(
\big|e_{G'_{1,0}}(N_0,N_0)\big|,
\frac{\delta d|\lambda|}{1+r\lambda})$;
\item delete
 $Y''$ many edges
in $e_{G'_{1,0}}(N_0,N_0)$
uniformly at random;
\item
$Y''_{ij}=1$ if and only if $\{i,j\}
\in e_{G'_{1,0}}(N_0,N_0)$
and $\{i,j\}$ is not deleted.

\item To generate $Y_{ij}$,
let $E_{delete}$ denote
the set of $Y''$ deleted edges.

 If $Y''\geq
 \min\big\{\
 [\frac{\delta n d|\lambda|}{2(1+r)^2}]
 ,\big|e_{G'_{1,0}}(N_0,N_0)
 \big|\
 \big\} $,
 select
 $Y''-\min\big\{\
 [\frac{\delta n d|\lambda|}{2(1+r)^2}]
 ,\big|e_{G'_{1,0}}(N_0,N_0)
 \big|\
 \big\}$ many edges in
 $E_{delete}$ uniformly at random.
 For $i,j\in N_0$, set $Y_{ij} = 1$
 if and only if, $\{i,j\}\in
 e_{G'_{1,0}}(N_0,N_0)-E_{delete}$ ,
 or $\{i,j\}\in
 E_{delete}$ but is selected;

  If $Y''<
 \min\big\{\
 [\frac{\delta n d|\lambda|}{2(1+r)^2}]
 ,\big|e_{G'_{1,0}}(N_0,N_0)
 \big|\
 \big\} $,
  select
 other than $E_{delete}$,
 a set of
 $\min\big\{\
 [\frac{\delta n d|\lambda|}{2(1+r)^2}]
 ,\big|e_{G'_{1,0}}(N_0,N_0)
 \big|\
 \big\} -Y''$ edges
 from
 $e_{G'_{1,0}}(N_0,N_0)$
 uniformly at random.
  For $i,j\in N_0$, set $Y_{ij} = 1$ if and only if
  $\{i,j\}\in e_{G'_{1,0}}(N_0,N_0)$
  and is not selected.
\end{itemize}
Clearly
\begin{align}\nonumber
\oE\bigg[
\sum\limits_{\{i,j\}
\in e_{G'_{1,0}}(N_0,N_0)}
\big|
Y_{ij}-
Y''_{ij}
\big|\
\bigg|\ G_{1,0}',Y''
\bigg]
\leq
\bigg|
\min\big\{\
 [\frac{\delta n d|\lambda|}{2(1+r)^2}]
 ,\big|e_{G'_{1,0}}(N_0,N_0)
 \big|\
 \big\} -Y''
 \bigg|.
\end{align}
However, it is obvious
that
\begin{align}\nonumber
\oE\bigg[
\ \bigg|\min\big\{\
 [\frac{\delta n d|\lambda|}{2(1+r)^2}]
 ,\big|e_{G'_{1,0}}(N_0,N_0)
 \big|\
 \big\} -Y''
 \bigg|\
\bigg]
=O(\sqrt{n}+\delta^2 n).
\end{align}
Thus the proof is accomplished.

\subsection{Proof of lemma \ref{sbmlem2}}

We only prove that
with probability larger than $1-2^{-n}$,
 $\mbbP_{\sigma\sim IS(\beta,G_{h,i})}
\bigg(\ \epsilon_0
< x(\sigma)
 \bigg)<2^{-n}$. The other conclusions follow
 in the same fashion.

Firstly we prove a large deviation
result for $J(\sigma;G)$ with
$G$ generated by a SBM. Recall
from (\ref{sbmdefg}) the definition
of $g(\cdot)$.
Fix any $\sigma\in \{0,1\}^{V(G)}$,
we have for any $0\leq \varepsilon$,
\begin{align}\label{sbmeq3}
&\mbbP_G\bigg(
\ \big|\ J(\sigma;G)-\oE_G[J(\sigma;G)]
\ \big|
> \varepsilon \oE_G[J(\sigma;G)]
\ \bigg)\\ \nonumber
\leq& \exp\bigg
\{
-\oE_G[J(\sigma;G)]\cdot
g(\varepsilon)
 \bigg\}.
\end{align}
The proof of (\ref{sbmeq3}) follows by
standard use of Chernoff inequality
and a calculation of
$\oE_G(e^{\theta J(\sigma;G)})$ as
follows.
Let $\mbbP(\{u,v\}\in E(G)) = \frac{p_{uv}}{n}$,
for any $\theta\in\mathbb{R}$,
\begin{align}\nonumber
\oE_G[e^{\theta J(\sigma;G)}]
 =& \prod\limits_{u<v\in V(G)}
 ((1+(e^\theta-1)\frac{p_{uv}}{n})
 \\ \nonumber
 \leq&
  \exp\bigg\{
 (e^\theta-1)\sum\limits_{u<v\in V(G)}
 \frac{p_{uv}}{n}
 \bigg\}
 \\ \nonumber
 =&\exp
 \bigg\{(e^\theta-1)\oE_G[J(\sigma;G)]
\bigg\}.
\end{align}

Combine (\ref{sbmeq3}) with
Borel Cantali's lemma we have,
\begin{align}\label{sbmeq5}
&\mbbP
\bigg(
\
(\exists \sigma\in \{0,1\}^{V(G_{h,i})})
\ \big|J(\sigma; G_{h,i})-\oE_{G_{h,i}}(J(\sigma;G_{h,i}))\big|
> \varepsilon
\cdot\big(\ \sup\limits_{\sigma}
\oE_{G_{h,i}}[J(\sigma;G_{h,i})]
\ \big)
 \ \bigg)\\ \nonumber
 \leq& 2^n
  \exp\bigg\{\
  -g(\varepsilon)\cdot
\big( \inf\limits_{\sigma}
 \oE_{G_{h,i}}[J(\sigma;G_{h,i})]
 \ \big)
\  \bigg\}.
\end{align}

We take advantage of the following
evaluation for
$\frac{1}{n}\oE_{G_{h,i}}[J(\sigma;G_{h,i})]
$ (which clearly follows from
the construction of $G_{h,i}$):
\begin{align}\label{sbmeq4}
\bigg|\
\frac{1}{n}\oE_{G_{h,i}}[J(\sigma;G_{h,i})]&-
\frac{1}{2}\begin{pmatrix}
\frac{x(\sigma)}{1+r},
\frac{ry(\sigma)}{1+r}
\end{pmatrix}
\begin{pmatrix}
&d(1+r\lambda) &d(1-\lambda)\\
&d(1-\lambda) &d(1+\frac{\lambda}{r})
\end{pmatrix}
\begin{pmatrix}
\frac{x(\sigma)}{1+r}\\
\frac{ry(\sigma)}{1+r}
\end{pmatrix}
\\ \nonumber
&- \frac{1}{2}\begin{pmatrix}\frac{(1-x(\sigma))}{1+r},
\frac{r(1-y(\sigma))}{1+r}\end{pmatrix}
\begin{pmatrix}
&d(1+r\lambda) &d(1-\lambda)\\
&d(1-\lambda) &d(1+\frac{\lambda}{r})
\end{pmatrix}
\begin{pmatrix}
\frac{1-x(\sigma)}{1+r}\\
\frac{r(1-y(\sigma))}{1+r}
\end{pmatrix}\ \bigg|\
\\ \nonumber
=&
\bigg|\ \frac{1}{n}\oE_{G_{h,i}}[J(\sigma;G_{h,i})]-
[r\lambda(x(\sigma)-y(\sigma))^2+
(x(\sigma)+ry(\sigma)-\frac{1+r}{2})^2]\frac{d}{(1+r)^2}
-\frac{d}{4}
\bigg|
\\ \nonumber
\leq& \delta d.
\end{align}

Thus using (\ref{sbmeq4}) (and
since $\delta$ is sufficiently small)
\begin{align}\label{sbmeq14}
&\inf\limits_{\sigma \in
 \{0,1\}^{V(G_{h,i})}}
 \frac{1}{n}\oE_{G_{h,i}}[
 J(\sigma;G_{h,i})]
 \geq
 \frac{d}{(1+r)^2} C(r,\lambda)-d\delta
 \geq d\frac{C(r,\lambda)}{2(1+r)^2}
\\ \nonumber
&\sup\limits_{\sigma \in
 \{0,1\}^{V(G_{h,i})}}
 \frac{1}{n}\oE_{G_{h,i}}[J(\sigma;G_{h,i})]
 \leq
  \frac{d}{2}+d\delta\leq \frac{3d}{4}.
  \end{align}
  Substituting (\ref{sbmeq14}) into
  (\ref{sbmeq5}),
  \begin{align}\label{sbm5-2}
&\mbbP
\bigg(
\
(\exists \sigma\in \{0,1\}^{V(G_{h,i})})
\ \big|\
J(\sigma; G_{h,i})-\oE_{G_{h,i}}[J(\sigma;G_{h,i})]
\ \big|
> \varepsilon \frac{3}{4}d n
 \bigg)\\ \nonumber
 \leq&
 \exp\bigg\{\  \log 2\cdot n
 -g(\varepsilon)\cdot
d \frac{C(r,\lambda)}{2(1+r)^2}\cdot n \ \bigg\}
  .\end{align}
By condition \ref{sbmcond1}
item 2-(d)
$\varepsilon_0$ is well defined.
So substituting
$\varepsilon$ by $\varepsilon_0$
in (\ref{sbm5-2}) and
 by definition of $\varepsilon_0$ we have,
\begin{align}\label{sbmeq6}
\mbbP
\bigg(
\
(\exists \sigma\in \{0,1\}^{V(G_{h,i})})
\ \big|\ J(\sigma; G_{h,i})-\oE_{G_{h,i}}(J(\sigma;G_{h,i}))
\ \big|
> \frac{3\varepsilon_0}{4} dn
\ \bigg)
\leq e^{-\log 2\cdot n}
.\end{align}

Now we can prove lemma \ref{sbmlem2}.
Clearly for any $\sigma'$,
\begin{align}\label{sbmeq34}
&\mbbP_{\sigma\sim IS(\beta,G_{h,i})}
\bigg(\
\epsilon_0
< x(\sigma)
\bigg)
<
 \dfrac{
 \sum
 \limits_{
 \epsilon_0
< x(\sigma)
}
 \exp\{-\beta J(\sigma; G_{h,i})\}}
 {\exp\{-\beta J(\sigma';G_{h,i})\}}.
\end{align}
Set
$\sigma'$ to be any element of $\{0,1\}^{V(G_{h,i})}$
satisfying:
\begin{align}\nonumber
\bigg|\ \oE_{G_{h,i}}[
J(\sigma';G_{h,i})
]-
\inf\limits_{0\leq x,y\leq 1}
\big\{\
[\lambda r(x-y)^2+(x+ry-\frac{1+r}{2})^2]
 \frac{d}{(1+r)^2}+\frac{d}{4}\
\big\}
\bigg|
\leq \delta d.
\end{align}
By (\ref{sbmeq4}) such $\sigma'$ exists.
By (\ref{sbmeq6}),
with probability larger
than $1-2^{-n}$,
for all $\sigma$,
approximating $J(
\sigma;G_{h,i})$ by
$\oE_{G_{h,i}}[ J(\sigma;G_{h,i}) ]$
introduce an error smaller than
$\frac{3\varepsilon_0 d}{4}$.
Moreover, by (\ref{sbmeq4}),
for all $\sigma$
approximating
$\oE_{G_{h,i}}[
 J(\sigma;G_{h,i}) ]$ by
$[r\lambda(x(\sigma)-y(\sigma))^2+
(x(\sigma)+ry(\sigma)-\frac{1+r}{2})^2]\frac{d}{(1+r)^2}
+\frac{d}{4}$
introduce an error smaller than $\delta d$. 
Therefore,
using (\ref{sbmeq34}) we have,
for all $i\leq [\frac{\delta n}{(1+r)^2}]$,
the following event
occurs with probability
larger than $1-2^{-n}$:
\begin{align}\nonumber
&\mbbP_{\sigma\sim IS(\beta,G_{h,i})}
\bigg(\
\epsilon_0
< x(\sigma)
\
 \bigg)
 \\ \nonumber
 \leq & 2^n\exp\bigg\{-\beta n
 \bigg(\ \inf\limits_{
 \epsilon_0 <x\leq 1, 1/2\leq y\leq 1}
 \big\{\ [\lambda r(x-y)^2+(x+ry-\frac{1+r}{2})^2]
 \frac{d}{(1+r)^2}+\frac{d}{4}\big\} - d\delta -
 d\frac{3\varepsilon_0}{4}
 \ \bigg)
 \bigg\}\\ \nonumber
 &\cdot\exp\bigg\{\ \beta n \bigg(
\inf\limits_{
0\leq x,y\leq 1}
 \big\{\ [\lambda r(x-y)^2+(x+ry-\frac{1+r}{2})^2]
 \frac{d}{(1+r)^2}+\frac{d}{4}\ \big\}
 +d\delta +d\frac{3\varepsilon_0}{4} \bigg)
 \ \bigg\}
 \\ \nonumber
 &\text{by proposition \ref{sbmprop2}
 conclusion  1
 and
 since } 2\delta <\frac{\varepsilon_0}{2}
 \\ \nonumber
 \leq &\exp
 \bigg\{
 \ \big[\log 2 + 2\beta\varepsilon_0 d
 -\big(\ 2\beta\varepsilon_0 d
 +2\log 2\ \big)
 \big]\cdot n
 \ \bigg\}
 \\ \nonumber
 =& 2^{-n}.
\end{align}
Similarly, with probability larger
than $1-2^{-n}$
\begin{align}\nonumber
&\mbbP_{\sigma\sim IS(\beta,G_{h,i})}
\bigg(\ \big|
y(\sigma)-y^*
\big|>\epsilon_1\
 \bigg)
 \leq 2^{-n}.
 \end{align}
Thus the proof is accomplished.

\subsection{Proof of lemma \ref{sbmmainlem1}}

We demonstrate the
proof of (\ref{sbmmaineq1}) by
   analyzing (\ref{sbmeq8}) for $h=0$.
Fix an arbitrary
$n,i$, we have to evaluate
$
\oE_{G_{0,i+1},\sigma|G_{0,i}}
\big[
\ e^{-\beta\cdot e^0_{i0}}
+e^{-\beta \cdot e^0_{i1}}
\ \big]$.
Note that $e^h_{il} =
\sum\limits_{j\in \sigma^{-1}(l)}
X_{h,ij}$.
But $X_{h,ij}$ are mutually independent whose
 distribution does not concern $\sigma$.
Therefore,
\begin{align}\label{sbmeq9}
&\oE_{G_{0,i+1}|
G_{0,i},\sigma}
\big[\ e^{-\beta e^0_{i0}}
\ \big]
\\ \nonumber
\leq & \exp
\bigg\{
(e^{-\beta}-1)\cdot
(\ \frac{|\sigma^{-1}(0)\cap N_0|}{n}
d(1+r\lambda)+
\frac{|\sigma^{-1}(0)\cap N_1|}{n}
d(1-\lambda)
\ )\
\bigg\},\\ \nonumber
&
\oE_{G_{0,i+1}|
G_{0,i},\sigma}
\big[
\ e^{-\beta e^0_{i1}}]
\\ \nonumber
\leq &
\exp
\bigg\{(e^{-\beta}-1)\cdot
(\ \frac{|\sigma^{-1}(1)\cap N_0|}{n}
d(1+r\lambda)+
\frac{|\sigma^{-1}(1)\cap N_1|}{n}
d(1-\lambda)
\ )
\bigg\}.
\end{align}

Note that,
\begin{align}\label{sbmeq24}
\big|\frac{|\sigma^{-1}(h)\cap N_0|}{n}
- \frac{|\sigma^{-1}(h)\cap N_0|/|N_0|}{1+r}
\big|,
\ \
\big|\frac{|\sigma^{-1}(h)\cap N_1|}{n}
- \frac{r|\sigma^{-1}(h)\cap N_1|/|N_1|}{1+r}
\big|
\leq \frac{r\delta}{(1+r)^2}
.\end{align}

By lemma \ref{sbmlem2},
with probability larger than $1-2^{-n}$,
$\mbbP_{\sigma\sim IS(\beta, G_{h,i})}
\big(\
|x(\sigma)-x^*|\leq \epsilon_0,
|y(\sigma)-y^*|\leq \epsilon_1
\big)\geq 1-2^{-n}$.
Therefore continue
(\ref{sbmeq9}) and
approximate $\frac{|\sigma^{-1}(\overline{l})
\cap N_0|}{n}$
with $\frac{x(\sigma)}{1+r}$,
$\frac{|\sigma^{-1}(\overline{l})
\cap N_1|}{n}$
with
$\frac{y(\sigma)r}{1+r}$
(recall from (\ref{sbmdefxy}) the definition of $\overline{l}$,
$x(\sigma),y(\sigma)$).
By (\ref{sbmeq24}),
we have that
with probability $1-e^{-\Omega(n)}$:
for all $i\leq [\frac{\delta n}{(1+r)^2}]$,
\begin{align}\label{sbmeq11}
&\oE_{G_{0,i+1},\sigma|G_{0,i}}
\big[
\ e^{-\beta\cdot e^0_{i0}}
+e^{-\beta \cdot e^0_{i1}}
\
\big]
\\ \nonumber
\leq&
\oE_{G_{0,i+1},\sigma|G_{0,i}}
\big[
\ e^{-\beta\cdot e^0_{i\ 1-\overline{l}}}
+e^{-\beta \cdot e^0_{i\overline{l}}}
\ \big|\ \
|x(\sigma)-x^*|\leq \epsilon_0,
|y(\sigma)-y^*|\leq \epsilon_1
\big]\\ \nonumber
&+
2e^{2d(1-\lambda)}\cdot
\mbbP_{\sigma\sim IS(\beta,G_{0,i})}
\big(\
|x(\sigma)-x^*|>\epsilon_0\vee
|y(\sigma)-y^*|>\epsilon_1
\big)
\\ \nonumber
\leq&
\oE_{\sigma|G_{0,i}}
\bigg[\
\exp
\bigg\{
(e^{-\beta}-1)
[\ -d\delta+
\frac{1-x(\sigma)}{1+r}
d(1+r\lambda)
+
\frac{
(1-y(\sigma))r
}{1+r}
dr(1-\lambda)
\ ]
\bigg\}\\ \nonumber
&\hspace{1.2cm}
+\exp
\bigg\{
(e^{-\beta}-1)
[\ -d\delta+
\frac{x(\sigma)}{1+r}
d(1+r\lambda)
+
\frac{
y(\sigma)r
}{1+r}
dr(1-\lambda)
\ ]
\bigg\} \ \ \bigg|
\ |x(\sigma)-x^*|\leq \epsilon_0,
|y(\sigma)-y^*|\leq \epsilon_1
\ \bigg]+
 e^{-\Omega(n)}
\\ \nonumber
\leq&
\exp
\bigg\{(e^{-\beta}-1)
(-\max\{\epsilon_0,\epsilon_1\}d-\delta d)
\bigg\}
\\ \nonumber
&\hspace{0.3cm}
\cdot\bigg[\
\exp\bigg\{
(e^{-\beta}-1)
[\frac{x^*}{1+r}d(1+r\lambda)+
\frac{y^*r}{1+r}d(1-\lambda)]
\bigg\}
\\ \nonumber
&\hspace{0.8cm}+\exp\bigg\{
(e^{-\beta}-1)
[\frac{1-x^*}{1+r}d(1+r\lambda)+
\frac{(1-y^*)r}{1+r}d(1-\lambda)
]
\bigg\}\
\bigg]
+e^{-\Omega(n)}
\\ \nonumber
&\text{by proposition \ref{sbmprop2}
conclusion 2}
\\ \nonumber
\leq&
2\exp\bigg\{
(e^{-\beta}-1)
\big[\ -2\max\{\epsilon_0,\epsilon_1\}+
\frac{1+r-y^*r+y^*r\lambda}{1+r}
\ \big]\cdot d\ \bigg\}.
\end{align}
Similarly,
with probability $1-e^{-\Omega(n)}$:
for all $i\leq [\frac{\delta n}{(1+r)^2}]$,
\begin{align}\label{sbmeq12}
&\oE_{G_{1,i+1},\sigma|G_{1,i}}
\big[
\ e^{-\beta\cdot e^1_{i0}}
+e^{-\beta \cdot e^1_{i1}}
\ \big]
\\ \nonumber
\leq&
\exp
\bigg\{\ (e^{-\beta}-1)
(-\max\{\epsilon_0,\epsilon_1\}d-\delta d)
\bigg\}
\\ \nonumber
&\hspace{0.3cm}
\cdot\bigg[\
\exp\bigg\{
(e^{-\beta}-1)
[\frac{x^*}{1+r}d(1-\lambda)+
\frac{y^*r}{1+r}d(1+\frac{\lambda}{r})]
\bigg\}
\\ \nonumber
&\hspace{0.8cm}+\exp\bigg\{
(e^{-\beta}-1)
[\frac{1-x^*}{1+r}d(1-\lambda)+
\frac{(1-y^*)r}{1+r}d(1+\frac{\lambda}{r})
]
\bigg\}\
\bigg]
+e^{-\Omega(n)}
\\ \nonumber
&\text{ by proposition \ref{sbmprop2} conclusion 2}
\\ \nonumber
\leq&
2\exp
\bigg\{\
(e^{-\beta}-1)
\big[\ -2\max\{\epsilon_0,\epsilon_1\}
+\frac{y^*(r+\lambda)}{1+r}
\ \big]\cdot d
\bigg\}.
\end{align}

Substitute (\ref{sbmeq12})(\ref{sbmeq11})
into (\ref{sbmeq8}) we have,
with probability $1-e^{-\Omega(n)}$:
for all $i\leq [\frac{\delta n}{(1+r)^2}]$,
\begin{align}\nonumber
&\oE_{G_{0,i+1}|G_{0,i}}
\big[\ Z(\beta,G_{0,i+1})-Z(\beta,G_{0,i})\ \big]
\\ \nonumber
\leq&
\log 2+(e^{-\beta}-1)
[\ -2\max\{\epsilon_0,\epsilon_1\}+
\frac{1+r-y^*r+y^*r\lambda}{1+r}
\ ]\cdot d,
\\ \nonumber
&\oE_{G_{1,i+1}|G_{1,i}}
\big[\ Z(\beta,G_{1,i+1})-Z(\beta,G_{1,i})\ \big]
\\ \nonumber
\leq&
\log 2+
(e^{-\beta}-1)
[\ -2\max\{\epsilon_0,\epsilon_1\}+
\frac{y^*(r+\lambda)}{1+r}
\ ]\cdot d.
\end{align}
Thus the conclusion of (\ref{sbmmaineq1})
follows.

Proving (\ref{sbmmaineq2}) is similar.
Using lemma \ref{sbmlem2},
proposition \ref{sbmprop2} conclusion 2 and
 approximating
$\frac{|\sigma^{-1}(\overline{l})
\cap N_0|}{n}$
with $\frac{x(\sigma)}{1+r}$,
$\frac{|\sigma^{-1}(\overline{l})
\cap N_1|}{n}$
with
$\frac{y(\sigma)r}{1+r}$,
in the same way as
 (\ref{sbmeq11}),
we have that with probability $1-e^{-\Omega(n)}$:
for all $i\leq [\frac{\delta n}{(1+r)^2}]$,
\begin{align}\label{sbmeq13}
\oE_{G_{0,i+1},\sigma|G_{0,i}}
\big[\ \min\{ e^0_{i0},e^0_{i1}\}\ \big]
=& \oE_{G_{0,i+1},\sigma|G_{0,i}}
\big[\ \min\{ e^0_{i\ 1-\overline{l}},e^0_{i\overline{l}}\}\ \big]
\\ \nonumber
\leq&
\min\bigg\{
\oE_{G_{0,i+1},\sigma|G_{0,i}}
\big[\ e^0_{i\ 1-\overline{l}}\ \big],
\oE_{G_{0,i+1},\sigma|G_{0,i}}
\big[\ e^0_{i \overline{l}}\ \big]
\bigg\}
\\ \nonumber
\leq&
\big[\ 2\max\{\epsilon_0,\epsilon_1\}
+
\frac{1+r-y^*r+y^*r\lambda}{1+r}
\ \big]\cdot d,
\\ \nonumber
\oE_{G_{1,i+1},\sigma|G_{1,i}}
\big[\ \min\{ e^1_{i0},e^1_{i1}\}\ \big]
=& \oE_{G_{1,i+1},\sigma|G_{1,i}}
\big[\ \min\{ e^1_{i\ 1-\overline{l}},e^1_{i\overline{l}}\}\ \big]
\\ \nonumber
\leq &\min
\bigg\{
\oE_{G_{1,i+1},\sigma|G_{1,i}}
\big[
\  e^1_{i\ 1-\overline{l}}
\ \big]
,
\oE_{G_{1,i+1},\sigma|G_{1,i}}
\big[
\ e^1_{i\overline{l}}
\ \big]
\bigg\}
\\ \nonumber
\leq&
\big[\
2\max\{\epsilon_0,\epsilon_1\}
+\frac{y^*(r+\lambda)}{1+r}
\big] \cdot
d.
\end{align}
Substituting (\ref{sbmeq13})
into (\ref{sbmeq10}), we have that
with probability $1-e^{-\Omega(n)}$:
for all $i\leq [\frac{\delta n}{(1+r)^2}]$,
\begin{align}\nonumber
&\oE_{G_{0,i+1}|G_{0,i}}
\big[\ Z(\beta,G_{0,i+1})-Z(\beta,G_{0,i})\ \big]
\\ \nonumber
\geq&
-\beta\ \big[
2\max\{\epsilon_0,\epsilon_1\}+
\frac{1+r-y^*r+y^*r\lambda}{1+r}
\big] d,
\\ \nonumber
&\oE_{G_{1,i+1}|G_{1,i}}
\big[\ Z(\beta,G_{1,i+1})-Z(\beta,G_{1,i})\ \big]
\\ \nonumber
\geq&
-\beta\ \big[
2\max\{\epsilon_0,\epsilon_1\}+
\frac{y^*(r+\lambda)}{1+r}
\big]
d.
\end{align}
Thus the conclusion of (\ref{sbmmaineq2})
follows.

\subsection{Proof of lemma \ref{sbmlem3}}

Recall that
$\{u_j,v_j\}$ denote the
edge added at step $j$
in the construction of $G_1'$.
Note that,
\begin{align}\nonumber
&Z(\beta,G'_{1,j+1})-
Z(\beta,G'_{1,j})
=
\log\bigg(
\oE_{\sigma\sim
IS(\beta, G'_{1,j})}
[e^{-\beta I(\sigma(u_j)=\sigma(v_j))}]
\bigg).
\end{align}

Using convexity of $\log$
(\ref{sbmeqlog})
as  in (\ref{sbmeq8}),
we have,
\begin{align}\label{sbmeq36}
\oE_{G'_{1,j+1}|G'_{1,j}}
\big[\ Z(\beta,G'_{1,j+1})-
Z(\beta,G'_{1,j})\ \big]
\leq
\log\bigg(
\oE_{G'_{1,j+1},\sigma|
G'_{1,j}}
\big[
e^{-\beta I(\sigma(u_j)=\sigma(v_j))}
\big]
\bigg).
\end{align}
And in the calculation of
$\oE_{G'_{1,j+1},\sigma|
G'_{1,j}}$, $\sigma,G'_{1,j+1}$
are independent conditional
on $G'_{1,j}$. In another word,
$\{u_j,v_j\}\bot \sigma$.
Thus,
\begin{align}\label{sbmeq19}
\oE_{G'_{1,j+1},\sigma|
G'_{1,j}}
\big[
e^{-\beta I(\sigma(u_j)=\sigma(v_j))}
\big]
=&
\oE_{\sigma|G'_{1,j}}
\bigg[\
\oE_{G'_{1,j+1}|\sigma,G'_{1,j}}
\big[
e^{-\beta I(\sigma(u_j)=\sigma(v_j))}
\big]
\ \bigg]
\\ \nonumber
=&
\oE_{\sigma|G'_{1,j}}
\bigg[\
1+ (e^{-\beta}-1)
\big(\ y(\sigma)^2+(1-y(\sigma))^2\
\big)
\bigg].
\end{align}
By lemma \ref{sbmlem2},
for all $
[\frac{d|\lambda|\delta n }{2(1+r)^2}]+1\leq
j\leq
2[\frac{d|\lambda|\delta n}{2(1+r)^2}]$,
with probability larger than $1-2^{-n}$:
$$\mbbP_{\sigma\sim IS(\beta,G'_{1,j})}\big(\
|y(\sigma)- y^*|> \epsilon_1
\ \big)\leq 2^{-n}.
$$
Therefore continue (\ref{sbmeq19}) we
have that with probability
 $1-e^{-\Omega(n)}$:
for all $[\frac{d|\lambda|\delta n }{2(1+r)^2}]+1\leq
j\leq
2[\frac{d|\lambda|\delta n}{2(1+r)^2}]$,
\begin{align}\label{sbmeq35}
\oE_{G'_{1,j+1},\sigma|
G'_{1,j}}
\big[
e^{-\beta I(\sigma(u_j)=\sigma(v_j))}
\big]
\leq&
\oE_{\sigma|G'_{1,j}}
\big[
\ (e^{-\beta}-1)
\big(\ y(\sigma)^2+(1-y(\sigma))^2\ \big)
\ \big]
\\ \nonumber
\leq&
1+(e^{-\beta}-1)
\big(\
1- 2y^*(1-y^*)
-
2\epsilon_1\ \big)+2^{-n}.
\end{align}
Substituting
(\ref{sbmeq35}) into (\ref{sbmeq36}) and
using inequality $\log(1+x)\leq x$,
the conclusion thus follows.

\section{Concluding remarks}
\label{sbmsec4}

In this paper we evaluate the log partition function
of the Ising model on the SBM with two communities.
The evaluation yields
 a consistent estimator of
the parameter $r$.
We also provid a random clustering algorithm with
positive correlation to the true community label.

\section{Acknowledgement}
We would like to thank Huifeng Peng for helpful
conversation on the topic.
We would like to thank Jing Zhou for comments
on a draft of this work.

\bibliographystyle{apalike}

\bibliography{sbm}

\end{document}